\newcommand{\mathsym}[1]{{}}
\newcommand{\unicode}[1]{{}}
\theoremstyle{plain}
\newtheorem{theorem}{Theorem}
\newtheorem{corollary}[theorem]{Corollary}
\newtheorem{proposition}[theorem]{Proposition}
\theoremstyle{definition}
\theoremstyle{remark}
\newtheorem{remark}[theorem]{Remark}
\numberwithin{equation}{section}
\numberwithin{theorem}{section}
\begin{document}


\title[Dip-ramp-plateau for the LUE structure function]{Quantifying dip-ramp-plateau for the Laguerre unitary ensemble structure function}
\author{Peter J. Forrester}
\address{School of Mathematics and Statistics, 
ARC Centre of Excellence for Mathematical \& Statistical Frontiers,
University of Melbourne, Victoria 3010, Australia}
\email{pjforr@unimelb.edu.au}

\date{\today}


\begin{abstract}
The ensemble average of $| \sum_{j=1}^N e^{i k \lambda_j} |^2$ is of interest as a probe of quantum chaos,
as is its connected part, the structure function. Plotting this average for model systems of chaotic spectra
reveals what has been termed a dip-ramp-plateau shape. Generalising earlier work of Br\'ezin and Hikami for
the Gaussian unitary ensemble, it is shown how the average in the case of the Laguerre unitary ensemble
can be reduced to an expression involving the spectral density of the Jacobi unitary ensemble. This facilitates
studying the large $N$ limit, and so quantifying the dip-ramp-plateau effect. When the parameter $a$ in the
Laguerre weight $x^a e^{-x}$ scales with $N$, quantitative agreement is found with the characteristic features
of this effect known for the  Gaussian unitary ensemble. However, for the parameter $a$ fixed, the
bulk scaled structure function is shown to have the simple functional form ${2 \over \pi} {\rm Arctan} \, k$, and
so there is no ramp-plateau transition.
\end{abstract}


\maketitle


\section{Introduction}\label{S1}

A prominent application of random matrix theory is to quantum chaos; see e.g.~the text
\cite{Ha00}. A basic postulate is that within blocks of the Hamiltonian corresponding to
good quantum numbers (e.g.~angular momentum etc.), and for large energy, the statistical
properties of the rescaled energy levels coincides with the statistical properties of the
bulk scaled eigenvalues of particular model Hamiltonians. The latter are random matrices:
$N \times N$ real (complex) Hermitian matrices $H$ formed from matrices with standard real
(complex) Gaussian entries
$X$ according to $H = {1 \over 2} ( X + X^\dagger)$ in the case that the Hamiltonian admits (does not
admit) a time reversal symmetry. In the real case, this class of random matrices is said to specify
the Gaussian orthogonal ensemble (GOE), and in the complex case the Gaussian unitary ensemble
(GUE). In a theoretical analysis, bulk scaling corresponds to first rescaling the eigenvalues
$\lambda \mapsto \pi \lambda / \sqrt{2N}$ so that in the neighbourhood of the origin the mean spacing
is unity, then taking the limit $N \to \infty$.

For spectral data, rescaling the energy levels is referred to as unfolding. 
In the case of the GOE or GUE, use can be made of the fact that
to leading order the eigenvalue density is
given by the Wigner semi-circle functional form $\rho_{(1)}(\lambda) =
(\sqrt{2N}/ \pi) (1 - \lambda^2/2 N)^{1/2}$, supported on $|\lambda| < \sqrt{2N}$;
see e.g.~\cite[eq.~(1.52)]{Fo10}. The edge eigenvalues in the data are discarded
by restricting to $|\lambda| < c \sqrt{2N}$ for a fixed $0 \ll c < 1$, thus leaving only bulk 
eigenvalues. Unfolding of these bulk eigenvalues  is carried out by rescaling $\lambda_j \mapsto
\lambda_j/ \rho_{(1)}(\lambda_j)$, so that the new density is
unity. For spectral data coming from less idealised circumstances,
where the theoretical eigenvalue density is not known, or the data is noisy, poorly resolved or even
incomplete, unfolding can no longer be precisely defined; see e.g.~\cite{Mo11}.
A fundamental question then arises: can an informative statistical quantity be found, providing
at least a qualitative indicator of quantum chaos, without unfolding?
In \cite{LLJP86} it was proposed that the structure function, also known as the spectral form factor
and first introduced into the study of quantum chaos by Berry \cite{Be85}, is well suited for this purpose.

 The spectral form factor can be viewed as  an example
of the variance of a particular linear statistic. Before specifying the
variance, let us first define the more general covariance.
Thus, with  $A = \sum_{j=1}^N a(\lambda_j)$,
$B = \sum_{j=1}^N b(\lambda_j)$ two general linear statistics, so named since $a(\lambda), b(\lambda)$ are
functions of a single eigenvalue only, the corresponding covariance is defined by
\begin{equation}\label{0.1}
{\rm Cov} \,  (A, B ) := \Big \langle (A - \langle A \rangle) (B - \langle B \rangle) \Big \rangle.
\end{equation}
Now choose
\begin{equation}\label{0.2}
 A = \sum_{j=1}^N e^{i k_1 \lambda_j}, \quad B = \sum_{j=1}^N e^{-i k_2 \lambda_j}.
\end{equation}
In the special case $k_1 = k_2 = k$, the covariance (\ref{0.1}) reduces to the variance
\begin{equation}\label{0.3}
{\rm Var} \, A := S_N(k) = \Big \langle \Big |  \sum_{j=1}^N e^{i k \lambda_j} \Big |^2  \Big \rangle -
 \Big |   \Big \langle  \sum_{j=1}^N e^{i k \lambda_j}   \Big \rangle  \Big |^2,
\end{equation}
and it is this quantity which is called the (unscaled) structure function.

The work \cite{LLJP86} identified a qualitative property of the graph of the first average on the RHS of (\ref{0.3}) --- namely the existence of a minimum value separating the small and large $k$ forms, and its  neighbourhood --- 
as an indicator of quantum chaos. There it was termed a correlation hole, and later as a dip-ramp-plateau, when it became
prominent in the course of recent studies on the scrambling of information in black holes 
\cite{C+17,CMS17} and many body quantum chaos \cite{CHLY17,TGS18,CMC19,CH19}. 
The term dip-ramp-plateau came about after the use of the GUE as a benchmark for the study of (\ref{0.3}) \cite{C+17}, and in particular the first average, where the three behaviours inherent in the name are clearly visible in the corresponding graph.
These more recent studies
also identified an analogous effect for the first term in the rewrite of the covariance
\begin{equation}\label{0.1a}
{\rm Cov} \,  (A, B ) :=  \langle A B \rangle - \langle A \rangle   \langle  B  \rangle,
\end{equation}
where $A, B$ are given by (\ref{0.2}) with
\begin{equation}\label{0.2a}
k_1 = i \Gamma + k, \qquad k_2 =- i \Gamma + k \qquad (\Gamma > 0).
\end{equation}

Some insight into the dip-ramp-plateau effect is obtained upon relating the averages (\ref{0.3}) and (\ref{0.1a})
to correlation functions. In relation to the latter, first recall that the joint eigenvalue probability density function
(PDF) for the GOE and GUE is of the form
\begin{equation}\label{0.3a}
P_N(\lambda_1,\dots,\lambda_N) = {1 \over C_N} \prod_{l=1}^N w(\lambda_l) \prod_{1 \le j < k \le N} | \lambda_k - \lambda_j |^\beta,
\end{equation}
with $w(x) = w^{(G)}(x) := e^{-\beta x^2/2}$ and $\beta = 1$ ($\beta = 2$) for the GOE (GUE). The corresponding
$k$-point correlation function $\rho_{(k)}$ is specified in terms of $P_N$ by
\begin{equation}\label{0.3b}
\rho_{(k)}(\lambda_1,\dots,\lambda_k) = {N! \over (N-k)!} \int_{-\infty}^\infty d\lambda_{k+1} \cdots   \int_{-\infty}^\infty d\lambda_{N} \, P_N(\lambda_1,\dots,\lambda_N).
\end{equation}
In the case $k=1$, this corresponds to the spectral density. The ratio $\rho_{(2)}(\lambda_1,\lambda_2)/ \rho_{(1)}(\lambda_2)$ has the
interpretation of the eigenvalue density at $\lambda_1$, given there is an eigenvalue at $\lambda_2$. Now introduce the microscopic density
\begin{equation}\label{0.3c}
n_{(1)}(\lambda) = \sum_{j=1}^N \delta (\lambda - \lambda_j),
\end{equation}
and use this to define the density-density correlation $N_{(2)}$,
\begin{equation}\label{0.3d}
N_{(2)}(\lambda,\lambda') =  {\rm Cov} \,  \Big ( n_{(1)}(\lambda), n_{(1)}(\lambda')  \Big ).
\end{equation}
The effect of the delta functions gives rise to integrals of the form (\ref{0.3b}) for $k=1$ and $k=2$, showing that
\begin{equation}\label{0.3e}
N_{(2)}(\lambda,\lambda') =  \rho_{(2)}(\lambda, \lambda') + \delta(\lambda - \lambda')  \rho_{(1)}(\lambda') -
 \rho_{(1)}(\lambda)  \rho_{(1)}(\lambda').
\end{equation}

For general linear statistics $A,B$ as defined above (\ref{0.1}), the covariance (\ref{0.1}) can be expressed in terms 
of $N_{(2)}$ as the double integral
\begin{align}\label{0.3f}
{\rm Cov} \,  (A, B ) & = \int_{-\infty}^\infty d \lambda \, a(\lambda)   \int_{-\infty}^\infty d \lambda' \, b(\lambda') \, N_{(2)}(\lambda,\lambda') \nonumber \\
& =  \int_{-\infty}^\infty d \lambda \, a(\lambda)   \int_{-\infty}^\infty d \lambda' \, b(\lambda') \, \Big ( \rho_{(2)}(\lambda, \lambda') + \delta(\lambda - \lambda')  \rho_{(1)}(\lambda') -
 \rho_{(1)}(\lambda)  \rho_{(1)}(\lambda') \Big ),
\end{align}
where the second equality follows from (\ref{0.3e}). From this second expression, separating off the term in the integrand involving the product
of densities corresponds to the form of the covariance (\ref{0.1a}), and specialising to $A,B$ given by (\ref{0.2}) with $k_1,k_2$ therein given by
(\ref{0.2a}), we deduce
\begin{multline}\label{0.3g}
 \langle A B \rangle =   \int_{-\infty}^\infty   d \lambda \, e^{(-\Gamma +i k)\lambda}  \int_{-\infty}^\infty d \lambda' \, e^{(-\Gamma -i k)\lambda'} \, \Big ( \rho_{(2)}^T(\lambda, \lambda') + \delta(\lambda - \lambda')  \rho_{(1)}(\lambda') \Big ) \\ +
 \Big |  \int_{-\infty}^\infty   e^{(-\Gamma -i k)\lambda'}  \rho_{(1)}(\lambda') \, d \lambda' \Big |^2,
 \end{multline}
where $\rho_{(2)}^T$ denotes the truncated (also known as connected) two point correlation, obtained from
$\rho_{(2)}$ by subtracting the product of the corresponding densities.

The significance of the decomposition (\ref{0.3g}) is that it distinguishes two distinct functional behaviours, both with respect to $N$,
and with respect to $k$. With respect to $N$, the first term is proportional to $N$ while the second is proportional to $N^2$. That the
first term is proportional to $N$ is a fundamental property of variances and covariances of smooth linear statistics in random matrix theory; see e.g.~\cite{PS11}.
With respect to $k$, the first term increases linearly from zero (the ramp) --- for an explanation in terms of screening in the
underlying log-gas picture, see \cite[\S 14.1]{Fo10}, or for one in terms of universality
see \cite{EY17} --- before asymptoting to a finite value (the plateau). In contrast, the second term
decreases to zero (the dip) as $k$ increases.

In the case of the GUE (indicated by the use of the subscript $(G)$), the structure function (\ref{0.3}) ${S}_N^{(G)}$ can be reduced to the single integral
via the quite striking identity
 \begin{equation}\label{bhX}
     {S}_N^{(G)}(k) =
      \int_0^k  t K_N^{(L)}(t^2/2,t^2/2) \Big |_{a=0} \, dt,
  \end{equation}
  as found by Br\'ezin and Hikami \cite{BH97} (for recent alternative derivations see \cite{Ok19,Fo20}).
  Here  $K_N^{(L)}$ denotes the correlation kernel for the Laguerre unitary ensemble (LUE), the latter
  corresponding to the eigenvalue PDF (\ref{0.3a}) with $\beta = 2$ and weight 
   \begin{equation}\label{WL}
  w(x) = w^{(L)}(x) = x^a e^{-x} \chi_{x > 0},
   \end{equation}
   where $\chi_A = 1$ for $A$ true and $\chi_A = 0$ otherwise;
  the specification of the correlation kernel is given in (\ref{1.7}) below. Thus (\ref{bhX}) is an example of
  an inter-relationship between different random matrix ensembles,  each with unitary symmetry; for others (albeit of
  a different nature) see \cite{Fo06,EL15}.
  Moreover, with the linear statistics $A,B$ given by (\ref{0.2}), as an extension of (\ref{bhX}), it
  was derived in \cite{Fo20} that
    \begin{equation}\label{H1X} 
   {\rm Cov } \,(A,B)^{(G)}
   =
   \int_0^{k_2} H^{(L)}(k_1 - k_2 +s,  s) \, ds, \quad   H^{(L)}(t_1,t_2) = {t_1 + t_2 \over 2} K_N^{(L)}(t_1^2/2, t_2^2/2) \Big |_{a=0}
  \end{equation}
(in the case that $k_1,k_2$ are given by (\ref{0.2a}) this identity was first given in \cite{Ok19}).
  
  Our aim in this paper is to seek analogues of (\ref{bhX}) and (\ref{H1X}) for the LUE --- it turns out that
  relative to the GUE the resulting structures are more complex, and we are restricted to 
  extending (\ref{bhX}).
  The quantity $S^{(L)}(k)$, in the special case $a=0$, has been the subject of attention from
  the viewpoint of numerical plots \cite{HL18} and approximate small $k$ analysis \cite{Li18}
  in the context of recent studies on the supersymmetric Sachdev-Ye-Kitaev (SYK) model
  \cite{KW17,LLXZ17,HL18}. The latter in turn is of interest both from the viewpoint of 
  information scrambling in black holes, and many body quantum chaos; see citations
  given above (\ref{0.1a}). In the case of $a$ proportional to $N$, an approximate analysis
  of $S^{(L)}(k)$ has been given in \cite{CL18} in the context of a study of the reduced density
  matrix for a chaotic many body wave function. 
  The LUE relates to supersymmetric models via the
  random chiral Hamiltonian structure
  \begin{equation}\label{HX}
  H = \begin{bmatrix} 0_{n \times n} & X \\
  X^\dagger & 0_{N \times N} \end{bmatrix}, 
  \end{equation}
  where $X$ is an $n \times N$ $(n \ge N)$ standard complex Gaussian matrix,
  with the square of the positive eigenvalues (which generally come in $\pm$ pairs) having joint distribution
  (\ref{0.3a}), weight (\ref{WL}), $a = n - N$; see \cite{Ve94}, or \cite[\S 3.1.1]{Fo10}. In relation to density matrices,
  which are positive definite matrices with unit trace, it is a fact that the eigenvalues of $X^\dagger X$ are the
  squared nonzero eigenvalues of (\ref{HX}) which gives relevance to the LUE; see e.g.~\cite{Pa93} or \cite[\S 3.3.4]{Fo10}.

  Whereas the identity (\ref{bhX})  for  $S_N^{(G)}$ involves the correlation kernel
  for the LUE (specialised to $a=0$), it turns out that the analogous expression for
  $S^{(L)}_N(k)$   involves the correlation
  kernel for the Jacobi unitary ensemble (JUE). The JUE corresponds to the eigenvalue PDF (\ref{0.3a}) with $\beta = 2$ and weight 
   \begin{equation}\label{WJ}
  w(x) = w^{(J)}(x) = x^a (1 - x)^b \chi_{1>x > 0}.
   \end{equation}
 It appears specialised to the case $b = 0$.
 
 \begin{theorem}\label{T1}
 Let $\rho_{(2)}^{T,(L)}(x,y)$ denote the truncated two-point correlation function
 for the LUE, and let $\rho_{(1)}^{(J)}(x)$ denote the eigenvalue density for the JUE.
 We have
   \begin{equation}\label{S.1} 
-   \int_{\mathbb R_+^2}  e^{i k ( x - y)} \rho_{(2)}^{T,(L)}(x,y) \, dx dy  =   
  \int_0^{1/(1 + k^2)}  \rho_{(1)}^{(J)}(x)  \Big |_{b=0} \, dx.
 \end{equation}   
 Equivalently
  \begin{equation}\label{S.2} 
 S^{(L)}_N(k)  =   
  \int_{1/(1 + k^2)}^1  \rho_{(1)}^{(J)}(x)  \Big |_{b=0}\, dx.
 \end{equation}  
  \end{theorem}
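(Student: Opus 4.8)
The plan is to exploit the determinantal ($\beta = 2$) structure of the LUE. Since the LUE is a determinantal point process with a real symmetric correlation kernel $K_N^{(L)}$, its truncated two-point function is $\rho_{(2)}^{T,(L)}(x,y) = -K_N^{(L)}(x,y)^2$. Substituting the general covariance formula (\ref{0.3f}) with $a(\lambda) = e^{ik\lambda}$, $b(\lambda) = e^{-ik\lambda}$, and using the half-line support of the weight (\ref{WL}), the delta-function term in (\ref{0.3e}) contributes $\int_0^\infty \rho_{(1)}^{(L)}(x)\,dx = N$, so that
\begin{equation}
S^{(L)}_N(k) = N - I_N(k), \qquad I_N(k) := \int_{\mathbb R_+^2} e^{ik(x-y)} K_N^{(L)}(x,y)^2 \, dx\, dy .
\end{equation}
This reduces (\ref{S.1}) to the assertion $I_N(k) = \int_0^{1/(1+k^2)} \rho_{(1)}^{(J)}(x)\big|_{b=0}\,dx$, and the equivalent form (\ref{S.2}) then follows immediately from the JUE normalisation $\int_0^1 \rho_{(1)}^{(J)}(x)\big|_{b=0}\,dx = N$. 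The boundary values $I_N(0) = N$ and $I_N(\infty) = 0$ (equivalently $S^{(L)}_N(0) = 0$, $S^{(L)}_N(\infty) = N$) serve as consistency checks.

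Next I would insert the orthogonal-polynomial form $K_N^{(L)}(x,y) = (xy)^{a/2} e^{-(x+y)/2} \sum_{n=0}^{N-1} c_n L_n^{(a)}(x) L_n^{(a)}(y)$, with $c_n = n!/\Gamma(n+a+1)$ and $L_n^{(a)}$ the Laguerre polynomials, and carry out the $x$- and $y$-integrations separately. Because $e^{ik(x-y)}(xy)^a e^{-(x+y)} = x^a e^{-(1-ik)x}\, y^a e^{-(1+ik)y}$, the double integral factorises into a double sum
\begin{equation}
I_N(k) = \sum_{m,n=0}^{N-1} c_m c_n\, A_{mn}(k)\, A_{mn}(-k), \qquad A_{mn}(k) := \int_0^\infty e^{(ik-1)x}\, x^a\, L_m^{(a)}(x) L_n^{(a)}(x)\, dx ,
\end{equation}
with $A_{mn}(-k) = \overline{A_{mn}(k)}$, so $I_N(k) = \sum_{m,n} c_m c_n |A_{mn}(k)|^2 \geq 0$. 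The $A_{mn}(k)$ can be evaluated in closed form from the Laguerre generating function: summing against $s^m t^n$ and integrating the single resulting exponential gives
\begin{equation}
\sum_{m,n \geq 0} A_{mn}(k)\, s^m t^n = \frac{\Gamma(a+1)}{\big[(1-ik) + ik(s+t) - (1+ik)\,st\big]^{a+1}} .
\end{equation}
Extracting the coefficient of $s^m t^n$ presents $A_{mn}(k)$, for $m \leq n$, as an explicit prefactor times a Jacobi polynomial $P_m^{(a,\,n-m)}$ whose argument is a M\"obius function of $k$; the natural variable that emerges is exactly $u := 1/(1+k^2) = 1/|1-ik|^2$, the endpoint in (\ref{S.1}).

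It then remains to collapse the double sum. One route is to substitute the Jacobi-polynomial evaluation of $A_{mn}(k)$, so that $c_m c_n |A_{mn}(k)|^2$ becomes shifted Jacobi polynomials on $[0,1]$ weighted by $x^a$, and to recognise the sum over $m,n < N$ via the Christoffel--Darboux formula for the JUE as the diagonal kernel $K_N^{(J)}(x,x)\big|_{b=0}$ integrated over $[0,u]$. A cleaner route is to telescope in $N$: writing $\mathcal P_N = \mathcal P_{N-1} + |\phi_{N-1}\rangle\langle\phi_{N-1}|$ for the rank-$N$ Laguerre projection (with $\phi_n$ the orthonormal Laguerre functions) and expanding $I_N(k) = \mathrm{Tr}\big(\mathcal P_N M_{e^{ikx}} \mathcal P_N M_{e^{-ikx}}\big)$, where $M_f$ denotes multiplication by $f$, the increment $I_N - I_{N-1}$ reduces to the matrix elements $A_{N-1,n}(k)$, summable in closed form. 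One then shows
\begin{equation}
I_N(k) - I_{N-1}(k) = \int_0^{1/(1+k^2)} x^a\, \tilde h_{N-1}^{-1}\, \tilde P_{N-1}(x)^2 \, dx ,
\end{equation}
where $\tilde P_{N-1}$ is the degree-$(N-1)$ polynomial orthogonal for the weight $x^a$ on $[0,1]$ (the $b=0$ JUE weight). Summing downward with $I_0 \equiv 0$ yields $I_N(k) = \int_0^{u} K_N^{(J)}(x,x)\big|_{b=0}\,dx$, which is (\ref{S.1}).

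The main obstacle is precisely this last special-function identity: verifying that the double sum of Laguerre--Laplace transforms reassembles into the $b=0$ Jacobi diagonal kernel, with the endpoint correctly pinned at $u = 1/(1+k^2)$. This requires tracking the exact normalisation constants ($c_n$, the Christoffel--Darboux coefficients, and the JUE norms $\tilde h_n$) and applying a Jacobi connection/Christoffel--Darboux identity to pass from the double-indexed family $P_m^{(a,\,n-m)}$ to the single-parameter family $P_n^{(a,0)}$; the appearance of $b=0$ is a consequence of the matching Laguerre exponents on the two integrations, and following this is the delicate point. As a possible economy, differentiating (\ref{S.2}) in $k$ reduces the target to the pointwise identity $\tfrac{d}{dk} S^{(L)}_N(k) = \tfrac{2k}{(1+k^2)^2}\, \rho_{(1)}^{(J)}\!\big(1/(1+k^2)\big)\big|_{b=0}$, the direct analogue of the GUE identity (\ref{bhX}), which may be simpler to establish than the integrated form. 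The remaining technical steps --- the interchange of summation and integration, and the removable behaviour at $x=y$, handled by writing $K_N^{(L)}$ in Christoffel--Darboux form --- are routine by comparison.
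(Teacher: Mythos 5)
Your setup is correct and shares real ingredients with the paper: the identification $\rho_{(2)}^{T,(L)}(x,y)=-\big(K_N^{(L)}(x,y)\big)^2$, the reduction of (\ref{S.2}) to $I_N(k)=\int_0^{1/(1+k^2)}\rho_{(1)}^{(J)}(x)\big|_{b=0}\,dx$ via the normalisation sum rules, and the generating-function evaluation of $A_{mn}(k)$ --- your $A_{mn}(k)$ is exactly the paper's $I_{mn}^{(L)}(s)$ of (\ref{Jjk}) at $s=ik$, and your bivariate generating function agrees with the computation in Proposition \ref{P3}. The genuine gap is the step you yourself flag as the main obstacle: collapsing the double sum $\sum_{m,n<N}c_mc_n|A_{mn}(k)|^2$ into $\int_0^{1/(1+k^2)}K_N^{(J)}(x,x)\big|_{b=0}\,dx$. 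That collapse \emph{is} the theorem, and neither of your routes supplies a mechanism for it. The Christoffel--Darboux formula resums a single-index sum $\sum_{n<N}p_n(x)p_n(y)/h_n$ with two free arguments; your sum runs over two indices of squared terminating ${}_2F_1$'s whose parameters are intertwined (your $P_m^{(a,\,n-m)}$ depends on $n$ through its second Jacobi parameter), and no CD-type identity applies directly. The telescoped increment $I_N-I_{N-1}=c_{N-1}^2|A_{N-1,N-1}|^2+2c_{N-1}\sum_{n<N-1}c_n|A_{N-1,n}|^2$ is indeed equal to $\tilde h_{N-1}^{-1}\int_0^{u}x^a\tilde P_{N-1}(x)^2\,dx$, but only because the theorem is true; proving that this single sum of squared hypergeometric functions reassembles into the integral of one squared Jacobi polynomial is a nontrivial quadratic/addition-type identity which you have not established and which does not follow from connection coefficients alone.

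The paper avoids the resummation entirely. It applies the second-order operator $(\partial_{z_1}-\partial_{z_2})(B_{z_1,z_2}-1)$, $B_{z_1,z_2}=1+z_1\partial_{z_1}+z_2\partial_{z_2}$, to $\int_{\mathbb R_+^2}e^{z_1x+z_2y}\big(K_N^{(L)}(x,y)\big)^2\,dxdy$; the Tracy--Widom differential identity (\ref{KLa}) of Proposition \ref{P2}, the skew-adjointness of $B_{x,y}$, and the Christoffel--Darboux form of $(x-y)K_N^{(L)}$ then annihilate every term of the double sum except those with indices in $\{N-1,N\}$, leaving the two-term expression (\ref{3.8}) in which only $I_{N,N}^{(L)}$ and $I_{N-1,N-1}^{(L)}$ appear. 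After setting $z_1=-z_2=it$ and changing variables to $u=1/(1+t^2)$, the Jacobi differential identity (\ref{2.26}) of Proposition \ref{P4} identifies the result as the perfect derivative $\frac{d}{du}\big(u(1-u)K_N^{(J)}(u,u)\big)\big|_{b=0}$, and two integrations give (\ref{S.1}). To complete your argument you would need either to prove the resummation identity directly (for instance by showing both sides of your telescoped identity satisfy the same second-order ODE in $u$ with matching data, which is in effect what the paper's route accomplishes), or to import the differential identities (\ref{KLa}) and (\ref{2.26}). Note also that your proposed economy --- proving the pointwise identity $\frac{d}{dk}S_N^{(L)}(k)=\frac{2k}{(1+k^2)^2}\,\rho_{(1)}^{(J)}\big(1/(1+k^2)\big)\big|_{b=0}$ instead --- is not actually easier: it is equivalent to (\ref{KLa2}), and the paper must differentiate \emph{twice more} before the double sum collapses.
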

  
 An application of (\ref{S.2}) is to the calculation of the bulk scaled limit of $S^{(L)}_N(k)$.
 \begin{corollary}\label{C1}
 Define
  \begin{equation}\label{De1}
  S^{(L)}_\infty(k;\alpha) = \lim_{N \to \infty} {1 \over N}   S^{(L)}_N(k) \Big |_{a = \alpha N}.
  \end{equation} 
  Let $0 \le c < 1$ be specified by  the equation
   \begin{equation}\label{De2}  
   c = \Big ( {\alpha \over 2 + \alpha} \Big )^2,
    \end{equation}  
define 
 \begin{equation}\label{De3}  
 \rho_{(1)}^{(J), \, {\rm global}}(x) : = {1 \over \pi (1 - \sqrt{c})} {1 \over x} \sqrt{x - c \over 1 - x} \chi_{c < x < 1},
  \end{equation} 
  and specify $k_c \ge 0$ by the equation
  \begin{equation}\label{De4}   
  {1 \over 1 + k_c^2} = c = \Big ( {\alpha \over 2 + \alpha} \Big )^2.
  \end{equation}
  We have
  \begin{align}\label{De5} 
   S^{(L)}_\infty(k;\alpha) & = \int_{1/(1+k^2)}^1 
    \rho_{(1)}^{(J), \, {\rm global}}(x) \, dx \nonumber \\
  & = {2 \over \pi (1 - \sqrt{c})} \Bigg (
   - \sqrt{c} {\rm Arctan} \, \sqrt{c(1-d) \over d - c} + {\rm Arcsin} \,
   \sqrt{1 - d \over 1 - c} \Bigg ) \Bigg |_{d = 1/(1 + k^2)},
  \end{align}
  valid for $0 \le k \le k_c$, and
    \begin{equation}\label{De6}   
 S^{(L)}_\infty(k;\alpha) = 1,
 \end{equation}
 valid for $    k \ge k_c$. 
  \end{corollary}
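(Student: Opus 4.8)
The plan is to take equation (\ref{S.2}) of Theorem \ref{T1} as the starting point and pass to the large $N$ limit under the integral sign. Writing $d = 1/(1+k^2)$, equation (\ref{S.2}) gives
\[
\frac{1}{N} S^{(L)}_N(k)\Big|_{a=\alpha N} = \int_d^1 \frac{1}{N}\,\rho_{(1)}^{(J)}(x)\Big|_{b=0,\, a = \alpha N}\, dx ,
\]
so the first task is to identify the global (macroscopic) limit of the JUE density when the parameter $a$ grows linearly in $N$.

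First I would establish that $\frac{1}{N}\rho_{(1)}^{(J)}(x)|_{b=0,\,a=\alpha N} \to \rho_{(1)}^{(J),\, {\rm global}}(x)$, the limit being the density (\ref{De3}). This is the equilibrium measure for the log-gas with joint PDF proportional to $\prod_l x_l^{\alpha N}\prod_{j<k}|x_k-x_j|^2$ on $[0,1]$: dividing the confining energy by $N$ yields the per-particle potential $V(x) = -\alpha\log x$, and minimising $\int V\,d\mu - \iint\log|s-t|\,d\mu(s)\,d\mu(t)$ over measures supported in $[0,1]$ produces a density with a soft edge at a left endpoint $x=c$ and a hard edge at $x=1$ (the latter because with $b=0$ the weight imposes no logarithmic cost at $1$, only the support constraint). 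Solving the resulting singular integral equation $-\alpha/x = 2\,{\rm PV}\!\int\rho(t)/(x-t)\,dt$ — equivalently, computing the associated Stieltjes transform — produces the form $\frac{1}{x}\sqrt{(x-c)/(1-x)}$ up to normalisation, fixes $c = (\alpha/(2+\alpha))^2$ as in (\ref{De2}), and determines the constant $1/(\pi(1-\sqrt c))$ through the requirement $\int_c^1\rho_{(1)}^{(J),\,{\rm global}}=1$; alternatively the density may simply be quoted from known global-limit (Wachter) results for the Jacobi/MANOVA ensemble. A uniform bound away from the edges, together with the integrable $(1-x)^{-1/2}$ singularity at the hard edge, justifies exchanging the limit and the integral, which gives the first line of (\ref{De5}).

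With the macroscopic density in hand, I would split according to the position of the lower limit $d = 1/(1+k^2)$ relative to the soft edge $c$. By (\ref{De4}) the condition $d\ge c$ is equivalent to $k\le k_c$; in the complementary range $k\ge k_c$ one has $d\le c$, so $[d,1]$ contains the entire support and the integral reduces to the total mass $\int_c^1\rho_{(1)}^{(J),\,{\rm global}}=1$, which is (\ref{De6}) — the plateau. For $k\le k_c$ the lower limit lies inside the support and the integral must be evaluated explicitly.

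The remaining computation is $\int_d^1 \frac{1}{x}\sqrt{(x-c)/(1-x)}\,dx$ for $c\le d\le 1$. Here the substitution $u = \sqrt{(x-c)/(1-x)}$, under which $x = (u^2+c)/(1+u^2)$, rationalises the integrand, and a partial-fraction split of $u^2/\bigl((u^2+c)(u^2+1)\bigr)$ yields the antiderivative $2\,{\rm Arctan}\,u - 2\sqrt c\,{\rm Arctan}(u/\sqrt c)$. Evaluating between $u(d)=\sqrt{(d-c)/(1-d)}$ and $u=\infty$ (the image of $x=1$), and applying the identities $\pi/2 - {\rm Arctan}\,u = {\rm Arcsin}\bigl(1/\sqrt{1+u^2}\bigr)$ and $\pi/2 - {\rm Arctan}(u/\sqrt c) = {\rm Arctan}(\sqrt c/u)$, converts the result into the ${\rm Arcsin}$/${\rm Arctan}$ form of (\ref{De5}) once the prefactor $1/(\pi(1-\sqrt c))$ is restored. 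I expect the main obstacle to be not this integral — routine once the substitution is chosen — but the rigorous justification of the global density (\ref{De3}) and the interchange of limit and integration, since the density is singular at both the hard edge $x=1$ and the soft edge $c$, with $c$ itself depending on $\alpha$.
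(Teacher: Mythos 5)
Your proposal is correct and follows essentially the same route as the paper: starting from (\ref{S.2}), substituting the Wachter global limit (\ref{Wa}) of the JUE density with $b=0$, $a=\alpha N$, and then integrating, with the plateau (\ref{De6}) arising exactly as you describe when $1/(1+k^2)$ falls below the left edge $c$. The only difference is one of detail — the paper simply quotes Wachter and evaluates the integral by computer algebra, whereas you sketch the equilibrium-measure origin of (\ref{De3}) and carry out the integral by hand via $u=\sqrt{(x-c)/(1-x)}$, which does reproduce (\ref{De5}) correctly.
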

  
  \begin{remark}\label{Re1}
  1.~The case $a$ fixed is obtained by taking $\alpha = 0$ in the above formulas.
  From (\ref{De4}) this corresponds to $k_c \to \infty$ so only the case
  (\ref{De5}) is required, which simplifies to
   \begin{equation}\label{De7} 
   S^{(L)}_\infty(k;0) = {2 \over \pi} {\rm Arctan} \, k.
    \end{equation}
    The absence of a ramp-plateau transition in the case, as distinct from
    the behaviour for $\alpha > 0$, was predicted in the work \cite{CL18} relating to random density matrices. \\
    2.~In the Appendix, prompted by a referee, an approximate analysis leading to (\ref{De7}) is presented.
    To put this in context, we recall that with
      \begin{equation}\label{De7.1}
      S_\infty^{(G)}(k) := \lim_{N \to \infty} {1 \over N} S_N^{(G)}(2 \sqrt{2N} \tau),
      \end{equation}
  Br\'ezin and  Hikami \cite{BH97} proved as a consequence of (\ref{bhX}) that
   \begin{equation}\label{De7.2x}
    S_\infty^{(G)}(k)  = \left \{ \begin{array}{ll}  {2 \over \pi} \Big ( \tau \sqrt{(1 - \tau^2)} + {\rm Arsin} \, \tau \Big ), & 0 < \tau < 1
    \\
    1, & \tau > 1. \end{array}   \right.
     \end{equation}
     In the same paper, it was shown how (\ref{De7.2x}) can be also deduced by
   approximate working based on the universal form of the bulk truncated two-point correlation
   function (see e.g.~\cite[rewrite of (7.2)]{Fo10})
    \begin{equation}\label{De7.2}
    \rho_{(2)}^{T \, {\rm bulk}}(x, y) = - {(\sin[ \pi \rho (x - y)] )^2 \over (\pi (x - y))^2 }.
    \end{equation}
    Here $\rho$ is the local eigenvalue density, the value of which depends on the choice of
    units in the bulk scaling. The idea of the referee, developed in the Appendix, is to
    use this same starting point as a mechanism which gives an explanation of the
    result (\ref{De7}).
    \end{remark}
  
In Section \ref{S2} we revise how the correlation kernel determines the correlation
functions for the LUE and JUE. For future use in the derivation of Theorem \ref{T1}, we
present differential identities for the correlation kernels in both cases, and also the evaluation of
a key definite integral involving the Laguerre polynomials. An integral evaluation relating to
the final term in (\ref{0.3g}) is derived in the first subsection of Section \ref{S3}, while the 
proof of Theorem \ref{T1}   is given in the second subsection. In Section \ref{S4}
scaled limits relevant to the dip-ramp-plateau effect are calculated, with the proof
of Corollary \ref{C1} given in the final subsection.

\section{Preliminaries}\label{S2}
Central to the study of the LUE are the Laguerre polynomials. These can be defined through the Rodrigues formula
\begin{equation}\label{1.1}
L_n^{(a)}(x) = {x^{-a} e^x \over n!} {d^n \over d x^n} \Big ( e^{-x} x^{n + a} \Big ) =
{(-1)^n \over n!} x^n + {(-1)^{n-1} (a + n)  \over (n-1)!} x^{n-1} + \cdots
\end{equation}
A convenient normalisation is to introduce a proportionality constant so that the polynomials are monic
(coefficient of leading monomial unity). Thus we define
\begin{equation}\label{1.2}
p_n^{(L)}(x) = n! (-1)^n L_n^{(a)}(x).
\end{equation}
From standard properties of the Laguerre polynomials, the corresponding orthogonality relation is
\begin{equation}\label{1.3}
\int_0^\infty x^a e^{-x} p_m^{(L)}(x)  p_n^{(L)}(x) \, dx = h_n^{(L)} \delta_{m,n}, \qquad
h_n^{(L)} = \Gamma(n+1) \Gamma(n+ a + 1).
\end{equation}

The orthogonality (\ref{1.3}) suggests introducing the orthogonal functions
\begin{equation}\label{1.4}
\psi_n^{(L)}(x) = \sqrt{w^{(L)}(x)} p_n^{(L)}(x), \qquad w^{(L)}(x) = x^a e^{-x} \chi_{x > 0}.
\end{equation}
Considering (\ref{1.4}) in squared variables, and so defining
$$
\hat{\psi}_n^{(L)}(X) = \sqrt{X} \psi_n^{(L)}(X^2),
$$
one has that $\{ \hat{\psi}_n^{(L)}(X) \}_{n=0}^\infty$ form a complete set of eigenfunctions for the Schr\"odinger
operator
$$
H^{(L)} = - {d^2 \over d X^2} + {a^2 - 1/4 \over X^2} + X^2, \qquad X > 0;
$$
see e.g.~\cite[\S 2.3]{DMS19}.
This differential operator results as a specialisation to $d=1$ of the radial part of the Schr\"odinger operator for the
$d$-dimensional harmonic oscillator --- there $a - 1/2$ relates to the quantum number for the corresponding
angular part of the same Schr\"odinger operator; see e.g.~\cite{MS96}.
The fact that $H^{(L)}$ is self adjoint with respect to the inner product
\begin{equation}\label{EpE}
\langle f, g \rangle : = \int_0^\infty f(x) g(x) \, dx
\end{equation}
gives an explanation for the orthogonality of the set of functions (\ref{1.4}).

As previously remarked, the LUE corresponds to the eigenvalue PDF (\ref{0.3}) with 
weight (\ref{WL}).
Standard theory associated with the PDFs (\ref{0.3a}) --- see
e.g.~\cite[Ch.~5]{Fo10} --- tells us that the $k$-point correlation functions
(\ref{0.3b})
have the determinantal form
\begin{equation}\label{1.6}
\rho_{(k)}(x_1,\dots, x_k)  = \det \Big [ K_N(x_j,x_l) \Big ]_{j,l=1}^k,
\end{equation}
where $K_N(x,y)$ --- referred to as the correlation kernel --- is specified by
\begin{equation}\label{1.7}
K_N(x,y) = \Big ( w(x) w(y) \Big )^{1/2} \sum_{j=0}^{N - 1} {p_j(x) p_j(y) \over h_j}.
\end{equation}
In (\ref{1.7}) $\{ p_j(x) \}$ are the set of monic orthogonal polynomials associated with the
weight function $w(x)$, normalisation $h_j$,
\begin{equation}\label{1.7a}
\int_{-\infty}^\infty w(x)  p_j(x)   p_k(x) \, dx = h_j \delta_{j,k}.
\end{equation}
Important is the explicit form of the sum in (\ref{1.7}), referred to as the Christoffel-Darboux formula
(see e.g.~\cite[Prop.~5.1.3]{Fo10})
\begin{equation}\label{1.8}
\sum_{j=0}^{N - 1} {p_j(x) p_j(y) \over h_j} = {1 \over h_{N-1}} {p_N(x) p_{N-1}(y) - p_N(y) p_{N-1}(x) \over x - y}.
\end{equation}
Hence for the LUE we have
\begin{equation}\label{KL}
K_N^{(L)}(x,y) =  {1 \over h_N^{(L)}} {\psi_{N}^{(L)}(x) \psi_{N-1}^{(L)}(y) - \psi_{N}^{(L)}(y) \psi_{N-1}^{(L)}(x) \over x - y}.
\end{equation}

Crucial to our derivation of the results of Theorem \ref{T1}  is an identity associated with the partial derivatives of (\ref{KL}) \cite{TW94c},
\cite[Proof of Prop.~5.4.2]{Fo10}.

\begin{proposition}\label{P2}
Let $\psi_n^{(L)}(x)$ be given by (\ref{1.4}) and $K_N^{(L)}(x,y)$ by (\ref{KL}). We have
\begin{equation}\label{KLa}
\Big ( x {\partial \over \partial x} +  y {\partial \over \partial y} \Big ) (xy)^{1/2} K_N^{(L)}(x,y) 
= - {(xy)^{1/2} \over 2 h_{N-1}^{(L)}} \Big ( \psi_N^{(L)}(x)  \psi_{N-1}^{(L)}(y) +
 \psi_{N-1}^{(L)}(x)  \psi_{N}^{(L)}(y) \Big ).
 \end{equation}
\end{proposition}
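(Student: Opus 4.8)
The plan is to establish the differential identity (\ref{KLa}) by computing the left-hand side directly from the Christoffel--Darboux form (\ref{KL}) and then reducing the resulting expression using the structural properties of the Laguerre functions $\psi_n^{(L)}(x)$. The key ingredient will be a first-order differential relation for the functions $\psi_n^{(L)}$, obtained from the three-term recurrence and the derivative (lowering/raising) relations for Laguerre polynomials. Specifically, writing $\psi_n^{(L)}(x) = \sqrt{w^{(L)}(x)}\, p_n^{(L)}(x)$ with $w^{(L)}(x) = x^a e^{-x}$, one has
\begin{equation*}
\Big ( x {d \over dx} \Big ) \psi_n^{(L)}(x) = \alpha_n \psi_n^{(L)}(x) + \beta_n \psi_{n-1}^{(L)}(x)
\end{equation*}
for suitable constants $\alpha_n,\beta_n$ depending on $n,a$; this is the analogue for the Laguerre weight of the standard structure relation $x \, \tfrac{d}{dx} L_n^{(a)} = n L_n^{(a)} - (n+a) L_{n-1}^{(a)}$, dressed by the factor $\sqrt{w^{(L)}}$ whose logarithmic derivative contributes the $\tfrac{a}{2}-\tfrac{x}{2}$ terms. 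First I would record this structure relation carefully, fixing the constants $\alpha_n,\beta_n$ in terms of $h_n^{(L)}$ from (\ref{1.3}).

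Next I would apply the operator $x\partial_x + y\partial_y$ to $(xy)^{1/2} K_N^{(L)}(x,y)$. It is cleanest to absorb the prefactor first: define $\tilde\psi_n(X) = X^{1/2}\psi_n^{(L)}(X^2)$ as in the excerpt, so that $(xy)^{1/2}K_N^{(L)}(x,y)$ becomes, after the change to squared variables, a Christoffel--Darboux kernel built from the $\tilde\psi_n$. The operator $x\partial_x$ in the original variable corresponds to $\tfrac12 X \partial_X$ in $X = x^{1/2}$, and the Schr\"odinger-operator picture recorded in the excerpt (where the $\tilde\psi_n = \hat\psi_n^{(L)}$ are eigenfunctions of $H^{(L)} = -d^2/dX^2 + (a^2-1/4)/X^2 + X^2$) suggests that the natural vehicle is the first-order intertwining/ladder relation underlying that second-order operator. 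Either route — the direct structure relation above or the ladder operators for $H^{(L)}$ — yields, upon inserting into the Christoffel--Darboux numerator $\psi_N^{(L)}(x)\psi_{N-1}^{(L)}(y) - (x\leftrightarrow y)$ and differentiating, a telescoping in which the diagonal $\alpha_n$-terms cancel against the derivative of the $1/(x-y)$ factor, leaving only the symmetric bilinear combination on the right-hand side of (\ref{KLa}).

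The main obstacle, and where I would concentrate the care, is the bookkeeping of the $(x-y)^{-1}$ singularity: differentiating the Christoffel--Darboux quotient produces both a term where the derivative hits the numerator and a term where it hits $(x-y)^{-1}$, and one must verify that the apparent poles at $x=y$ cancel so that the final expression is the manifestly regular symmetric product. The cleanest way to control this is to use the operator $x\partial_x + y\partial_y$, which is symmetric under $x\leftrightarrow y$ and annihilates the homogeneous degree-zero part of $(x-y)^{-1}$ in a controlled way — indeed $(x\partial_x + y\partial_y)(x-y)^{-1} = -(x-y)^{-1}$, so the $1/(x-y)$ factor is an eigenfunction of the operator and pulls out cleanly. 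Combining this eigen-relation with the structure relation for $\psi_n$, the $\alpha_N$ and $\alpha_{N-1}$ contributions reorganise so that the antisymmetric (Christoffel--Darboux) part is exactly cancelled, and only the $\beta_n$-driven symmetric combination survives; tracking the constant $-1/(2h_{N-1}^{(L)})$ through this cancellation is the remaining routine check. I would finish by confirming that the surviving constant matches the normalisation in (\ref{KLa}) using $h_N^{(L)}/h_{N-1}^{(L)} = (N)(N+a)$ from (\ref{1.3}).
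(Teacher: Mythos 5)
Your proposal is correct and is essentially the paper's own argument: the paper writes your structure relation in $2\times 2$ matrix form, $x\frac{d}{dx}(\psi_N^{(L)},\psi_{N-1}^{(L)})^T=A(x)(\psi_N^{(L)},\psi_{N-1}^{(L)})^T$ (the derivative relation for $L_n^{(a)}$ plus the three-term recurrence to close the system, exactly as you indicate), and then uses the degree-zero homogeneity of $(xy)^{1/2}/(x-y)$ under the Euler operator $x\partial_x+y\partial_y$ — equivalent to your observation that the eigenvalues $+1$ of $(xy)^{1/2}$ and $-1$ of $(x-y)^{-1}$ cancel. Two details in your sketch should be corrected before writing it out. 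First, $\alpha_n$ is not a constant: $x\frac{d}{dx}\psi_n^{(L)}=(n+\tfrac{a}{2}-\tfrac{x}{2})\psi_n^{(L)}+n(n+a)\psi_{n-1}^{(L)}$, and the $x$-dependence of the diagonal coefficient is precisely what drives the result. Second, you have the cancellation pattern backwards: when the relations are inserted into the antisymmetric numerator $\psi_N^{(L)}(x)\psi_{N-1}^{(L)}(y)-\psi_N^{(L)}(y)\psi_{N-1}^{(L)}(x)$, it is the \emph{constant} off-diagonal coefficients ($\beta_N=N(N+a)$ and the $\mp 1$ coupling $\psi_{N-1}^{(L)}$ back to $\psi_N^{(L)}$) that drop out by antisymmetry, while the surviving symmetric combination comes from the diagonal terms, whose $x$-dependent parts combine to $-\tfrac{1}{2}(x-y)$ and cancel the Christoffel--Darboux denominator, producing the constant $-\tfrac{1}{2}$ in front of $\psi_N^{(L)}(x)\psi_{N-1}^{(L)}(y)+\psi_{N-1}^{(L)}(x)\psi_N^{(L)}(y)$. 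With those attributions fixed, the computation closes exactly as in the paper; the detour through squared variables and the ladder operators of $H^{(L)}$ is not needed.
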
 

\begin{proof}
We proceed as in the derivation outlined in \cite[Proof of Prop.~5.4.2]{Fo10}. The
orthogonal functions (\ref{1.4}) satisfy the matrix differential recurrence
\begin{equation}\label{KLb}
x  {d \over d x} \begin{bmatrix} \psi_n^{(L)}(x) \\  \psi_{n-1}^{(L)}(x) \end{bmatrix}  =
 \begin{bmatrix} A_{11}(x) &  A_{12}(x) \\
 A_{21}(x) & A_{22}(x) \end{bmatrix}  \begin{bmatrix} \psi_n^{(L)}(x) \\  \psi_{n-1}^{(L)}(x) \end{bmatrix},  
\end{equation}
where
\begin{equation}\label{KLc}
A_{11}(x) = - A_{22}(x) = - {1 \over 2} (x - 2n - a), \quad
A_{12}(x) = n ( a + n), \quad A_{21}(x) = 1.
\end{equation}

For general differentiable $f = f(x,y)$ we can check
$$
\Big ( x {\partial \over \partial x} +  y {\partial \over \partial y} \Big )
{(xy)^{1/2} \over x - y} f = {(xy)^{1/2} \over x - y} \Big ( x  {\partial \over \partial x} +  y {\partial \over \partial y} \Big ) f.
$$
Choosing $f =    (\psi_{N}^{(L)}(x) \psi_{N-1}^{(L)}(y) - \psi_{N}^{(L)}(y) \psi_{N-1}^{(L)}(x) )/  h_N^{(L)}$, it follows
that
\begin{align*}
& \Big ( x {\partial \over \partial x} +  y {\partial \over \partial y} \Big )
(xy)^{1/2}   K_N^{(L)}(x,y)  \\
& \quad  = {1 \over   h_N^{(L)} } {(xy)^{1/2} \over   (x - y)} \Big ( x  {\partial \over \partial x} +  y {\partial \over \partial y} \Big )
\begin{array}{cc}[ \, \psi_N^{(L)}(x) & \psi_{N-1}^{(L)}(x) \,] \\
{} & {} \end{array}
\left [ \begin{array}{cc}0 & 1 \\ -1 & 0 \end{array} \right ]
\left [ \begin{array}{l} \psi_N^{(L)}(y) \\ \psi_{N-1}^{(L)}(y) \end{array} \right ].
\end{align*}

For the partial derivatives on the RHS, use of (\ref{KLb}) shows they can be carried
out to obtain
\begin{eqnarray*}
&& \begin{array}{cc}[ \, \psi_N^{(\rm L)}(x) & \psi_{N-1}^{(\rm L)}(x) \,] \\
{} & {} \end{array}
\left [ \begin{array}{cc}
\displaystyle - {A_{21}(x) - A_{21}(y) \over x - y}
& \displaystyle  {A_{11}(x) + A_{22}(y) \over x - y}
 \\ \displaystyle  - {A_{22}(x) + A_{11}(y) \over x - y}
 & \displaystyle  {A_{12}(x) - A_{12}(y) \over x - y}
 \end{array} \right ]
\left [ \begin{array}{l} \psi_N^{(\rm L)}((y)
\\ \psi_{N-1}^{(\rm L)}((y) \end{array} \right ]
\\&& \quad = \begin{array}{cc}[ \, \psi_N^{(\rm L)}(x) &
\psi_{N-1}^{(\rm L)}(x) \,] \\
{} & {} \end{array}
\left [ \begin{array}{cc}0 &  - {1 \over 2} \\    - {1 \over 2}  & 0 \end{array} \right ]
\left [ \begin{array}{l} \psi_N^{(\rm L)}((y) \\
\psi_{N-1}(y) \end{array} \right ]
\\&& \quad = - {1 \over 2}  \Big ( \psi_N^{(\rm L)}(x) \psi_{N-1}^{(\rm L)}(y)
+ \psi_{N-1}^{(\rm L)}(x) \psi_N^{(\rm L)}(y) \Big ),
\end{eqnarray*}
and (\ref{KLa}) follows.

\end{proof}

Also of importance is the closed form evaluation of the integral
\begin{equation}\label{Jjk}
I_{jk}^{(L)}(s) := \int_0^\infty L_j^{(a)}(x) L_k^{(a)}(x) x^a e^{(s-1)x} \, dx, \quad {\rm Re} \, s < 1,
\end{equation}
which we interpret as the Laplace-Fourier transform of $ L_j^{(a)}(x) L_k^{(a)}(x) w^{(L)}(x)$.
In fact its value can be read off by specialising formulas given in
standard compendiums of integral evaluations
\cite[Entries 2.19.14.6]{PBM86}, \cite[Entry 7.414.4]{GR80},
\cite[Entry 4.11 (35)]{EMOT54}. We owe our knowledge of these references
due to them appearing in the paper \cite[\S 4]{LOS01}, which considers
further generalisations of the integrals of products of Laguerre
polynomials. In a random matrix context, (\ref{Jjk}) first appeared in the
work of Haagerup and Thorbj{\o}rnsen   \cite{HT03}, where its evaluation was
stated as a known result (with reference to the early work \cite{Ma35} also
referenced in \cite{LOS01}), and a verification type proof was given.

A companion Fourier-Laplace transform to (\ref{Jjk}) is
\begin{equation}\label{Hjk}
I_{jk}^{(G)}(s) := \int_{-\infty}^\infty e^{s x} H_j(x) H_k(x) e^{-x^2}\, dx,
\end{equation}
where $H_n(x)$ denotes the Hermite polynomial of degree $n$.
This features prominently in the derivation of (\ref{bhX}) given in
\cite{Ok19}, \cite{Fo20}, the general strategy of which underpins
our derivation of the identities of Theorem \ref{T1}. The evaluation of
(\ref{Hjk}) can be found in a number of references --- many are listed
in \cite[statement of Prop.~13]{Fo20}. The most structurally revealing make
use of generating functions. This motivates us to give a self contained
generating function approach to compute (\ref{Jjk}). For this we take as
background knowledge the generating function formulas \cite[Entry 8.975.1]{GR80}
\begin{equation}\label{L1}
\sum_{n=0}^\infty t^n L_n^{(a)}(x) = (1 - t)^{-(a+1)} e^{- t x/ (1 - t)},
\quad |t| < 1,
\end{equation}
and \cite[Eq.~5.2 (12)]{SM84}
\begin{equation}\label{J1}
\sum_{n=0}^\infty {(c)_n \over n!} \,{}_2 F_1(-n,b,c;x) t^n
 = (1 - t)^{b-c} (1 - t + x t)^{-b}, \quad |t| < 1,
\end{equation}
where
\begin{equation}\label{J2}
(c)_n  :=  {\Gamma(c + n) \over \Gamma(c)}.
\end{equation}

\begin{proposition}\label{P3}
Let ${}_2 F_1$ denote the Gauss hypergeometric function. Define $I_{jk}^{(L)}(s)$
by (\ref{Jjk}). We have
\begin{equation}\label{J3}
I_{jk}^{(L)}(s) = \Gamma(a+1) {(a+1)_j \over j!}   {(a+1)_k \over k!} (1 - s)^{-(a+1)}
\Big ( - {s \over 1 - s} \Big )^{j+k} \, {}_2 F_1(-k,-j,a + 1; 1/s^2).
\end{equation}
\end{proposition}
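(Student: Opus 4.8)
The plan is to follow the generating-function route advertised just before the statement: assemble the bilinear generating function of the $I_{jk}^{(L)}(s)$ and read off its coefficients. First I would multiply two copies of (\ref{L1}), one in an auxiliary variable $u$ and one in $v$, insert the product under the integral defining $I_{jk}^{(L)}(s)$, and interchange the double sum with the integral. This produces
\[
\sum_{j,k \ge 0} u^j v^k I_{jk}^{(L)}(s) = (1-u)^{-(a+1)}(1-v)^{-(a+1)} \int_0^\infty x^a e^{-\mu x}\, dx,
\]
where $\mu = (1-s) + \tfrac{u}{1-u} + \tfrac{v}{1-v}$ collects the three exponential rates. For $|u|,|v|$ small and ${\rm Re}\, s < 1$ one has ${\rm Re}\, \mu > 0$, so the remaining integral is the elementary Gamma integral $\Gamma(a+1)\mu^{-(a+1)}$.

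The next step is purely algebraic. Clearing denominators gives $\mu = [\,(1-s)+su+sv-(1+s)uv\,]/[(1-u)(1-v)]$, so that $\mu^{-(a+1)} = (1-u)^{a+1}(1-v)^{a+1}[\,(1-s)+su+sv-(1+s)uv\,]^{-(a+1)}$; substituting this makes the prefactors $(1-u)^{-(a+1)}$ and $(1-v)^{-(a+1)}$ cancel exactly, leaving the strikingly simple closed form
\[
\sum_{j,k\ge 0} u^j v^k I_{jk}^{(L)}(s) = \Gamma(a+1)\,\big[(1-s)+su+sv-(1+s)uv\big]^{-(a+1)}.
\]
I would then extract coefficients in two stages. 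Grouping the bracket as $P + Qu$ with $P=(1-s)+sv$ and $Q=s-(1+s)v$ and expanding in $u$ by the binomial series gives the coefficient of $u^j$ as $\Gamma(a+1)(-1)^j\frac{(a+1)_j}{j!}Q^j P^{-(a+1)-j}$; after pulling out $s^j$ from $Q^j$ and $(1-s)^{-(a+1)-j}$ from $P^{-(a+1)-j}$, this is a constant times $\big(1-\tfrac{1+s}{s}v\big)^j\big(1+\tfrac{s}{1-s}v\big)^{-(a+1)-j}$.

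The final step is to recognise this product of two power factors as exactly the right-hand side of (\ref{J1}). Setting $t=-\tfrac{s}{1-s}v$, $c=a+1$, $b=-j$, and $x=1/s^2$ matches $(1-t)^{b-c}(1-t+xt)^{-b}$ factor by factor — the value $x=1/s^2$ being forced by requiring $1-t+xt = 1-\tfrac{1+s}{s}v$ — so (\ref{J1}) supplies the $v$-expansion with coefficients $\tfrac{(a+1)_k}{k!}\,{}_2F_1(-k,-j,a+1;1/s^2)$. Reading off the coefficient of $v^k$ and collecting the accumulated powers of $s$ and $1-s$ into $(1-s)^{-(a+1)}\big(-\tfrac{s}{1-s}\big)^{j+k}$ then yields (\ref{J3}). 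The main obstacle I anticipate is not any single computation but the justification of interchanging the double summation with the integral: I would secure this by working in a polydisc $|u|,|v|<\delta$ small enough that the exponential factors coming from (\ref{L1}) are dominated by an integrable majorant (guaranteed by ${\rm Re}\, \mu>0$), thereby obtaining an identity of functions analytic in $u,v$ there, and finally matching Taylor coefficients; since ${}_2F_1(-k,-j,a+1;1/s^2)$ is a polynomial of degree $\min(j,k)$ in $1/s^2$, multiplication by $(-s/(1-s))^{j+k}$ removes the apparent poles and the resulting identity for $I_{jk}^{(L)}(s)$ persists for all $s$ with ${\rm Re}\, s<1$.
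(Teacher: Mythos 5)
Your proposal is correct and follows essentially the same route as the paper's proof: both multiply two copies of the generating function (\ref{L1}), reduce the integral to a Gamma function, extract the coefficient of the first auxiliary variable by the binomial theorem, and then invoke (\ref{J1}) with $b=-j$, $c=a+1$, $x=1/s^2$ to read off the coefficient of the second. The only differences are cosmetic --- you write the bilinear generating function in the fully symmetric closed form $\Gamma(a+1)[(1-s)+su+sv-(1+s)uv]^{-(a+1)}$ where the paper works with the intermediate quantities $Y,\tilde Y$ of (\ref{Y}), and you add an explicit justification of the sum--integral interchange that the paper leaves implicit.
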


\begin{proof}
Use of the generating function (\ref{L1}) shows that
\begin{equation}\label{Y0}
\sum_{j,k=0}^\infty t_1^j t_2^k I_{jk}^{(L)}(s) = (1 - t_1)^{-(a+1)} (1 - t_2)^{-(a+1)} 
 \int_0^\infty e^{- t_1 x/ (1 - t_1) - t_2 x/ (1 - t_2)} x^a e^{(s-1)x} \, dx.
\end{equation}
The integal in (\ref{Y0}) reduces to the integral definition of the gamma function after a simple
change of variables. Introducing the notation
\begin{equation}\label{Y}
Y = 1 - s + t_2/(1-t_2), \qquad \tilde{Y} = (1-t_2)Y = 1 - s + s t_2,
\end{equation}
and upon some simple manipulation, this shows
\begin{equation}\label{Y1}
\sum_{j,k=0}^\infty t_1^j t_2^k I_{jk}^{(L)}(s) = \Gamma(a + 1) \tilde{Y}^{-(a+1)}
\bigg ( 1 - t_1 \Big (1 - {1 \over Y} \Big) \bigg )^{-(a+1)}.
\end{equation}

Using the binomial theorem, we read off from (\ref{Y1}) that the coefficient of $t_1^j$ is
$$
{(a + 1)_j \over j!} \Big ( 1 - {1 \over Y} \Big )^j =
{(a + 1)_j \over j!}  \bigg ( {\tilde{Y} - (1 - t_2) \over \tilde{Y} } \bigg )^j,
$$
where the equality follows from the definition (\ref{Y}). Hence we have
\begin{align}\label{Y1+}
\sum_{k=0}^\infty  t_2^k I_{jk}^{(L)}(s) & =  \Gamma(a + 1) {(a + 1)_j \over j!}  
 \tilde{Y}^{-(a+1+j)}
  \Big ( {\tilde{Y} - (1 - t_2)} \Big )^j \nonumber \\
  & =  \Gamma(a + 1) {(a + 1)_j \over j!}  (1 - s)^{-(a+1+j)} (-s)^j J(t_2;s),
 \end{align}
 where
 \begin{equation}\label{Y2}
 J(t_2;s) := (1 - \mu t_2)^{b - c} (1 - \mu t_2 + x \mu t_2)^{-b},
 \end{equation}
 with
 \begin{equation}\label{Y3} 
 b = - j, \quad c = a + 1, \quad \mu = {s \over s - 1}, \quad x = {1 \over s^2}.
 \end{equation} 
 The coefficient of $t_2^k$ in the power series expansion of (\ref{Y2}) can be read
 off from (\ref{J1}). Using this in (\ref{Y1+}) gives (\ref{J3}).
  \end{proof}
  
  The identity (\ref{S.1}) involves the correlation kernel for the JUE, with the latter in turn
  relating to the weight function (\ref{WJ}). The corresponding (monic) polynomials as
  specified by the requirement (\ref{1.7a}) are simply related to the Jacobi polynomials.
  However for present purposes it is preferable to write them in hypergeometric form
  (see e.g.~\cite{CS95})
   \begin{equation}\label{2.25} 
   p_n^{(J)}(x) = (-1)^n {(a+1)_n \over (a + b + n + 1)_n }\, {}_2 F_1 (-n,a+b+n+1,a+1;x),
   \end{equation}   
  and we read off from the same reference the explicit value of the norm
  \begin{equation}\label{2.25a} 
  h_n^{(J)} = {n!  \Gamma(a+n+1) \Gamma(b+n+1) \over  (a+b+2n+1) \Gamma(a+b+n+1)}.
  \end{equation} 
  
  We have use for a differential identity satisfied by $K_N^{(J)}(x,x)$ (note that according to
  (\ref{1.6}) this is equal to $\rho_{(1)}^{(J)}(x)$). It is a minor linear change of variables of a result in \cite[Lemma 5.6]{Le04}.
  We will give a different derivation, as a special case of a more general differential identity,
  of the type given in Proposition \ref{P2}, but now in relation to the Jacobi correlation
  kernel.
 
  \begin{proposition}\label{P4}
  We have
    \begin{equation}\label{2.26}  
    {d \over d s} \Big ( s (1 - s) K_N^{(J)}(s,s) \Big ) =
    - {(2N + a + b) \over  h_{N-1}^{(J)}} w^{(J)}(s) p_N^{(J)}(s)  p_{N-1}^{(J)}(s).
    \end{equation}
    \end{proposition}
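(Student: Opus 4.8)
The plan is to run the argument of Proposition \ref{P2} with the Laguerre data replaced by its Jacobi counterpart, the role of the coefficient $\sigma(x)=x$ now being played by the Pearson polynomial $\sigma(x)=x(1-x)$ attached to the weight (\ref{WJ}). Writing $\psi_n^{(J)}(x)=(w^{(J)}(x))^{1/2}p_n^{(J)}(x)$, the Christoffel--Darboux formula (\ref{1.8}) puts the kernel in the form
\[
K_N^{(J)}(x,y)=\frac{1}{h_{N-1}^{(J)}}\,\frac{\psi_N^{(J)}(x)\psi_{N-1}^{(J)}(y)-\psi_N^{(J)}(y)\psi_{N-1}^{(J)}(x)}{x-y},
\]
exactly as in (\ref{KL}). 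Since $(\sigma(x)\sigma(y))^{1/2}$ restricts on the diagonal $y=x$ to $s(1-s)$, and since for any symmetric $F$ one has $\big(\sigma(x)\partial_x+\sigma(y)\partial_y\big)F\big|_{y=x}=\sigma(s)\,\tfrac{d}{ds}F(s,s)$, it suffices to establish a two-variable identity for $(\sigma(x)\sigma(y))^{1/2}K_N^{(J)}(x,y)$ of the same shape as (\ref{KLa}) and then set $y=x$ and divide by $\sigma(s)=s(1-s)$.

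First I would record the first-order matrix differential recurrence
\[
x(1-x)\frac{d}{dx}\begin{bmatrix}\psi_n^{(J)}(x)\\ \psi_{n-1}^{(J)}(x)\end{bmatrix}
=\begin{bmatrix}A_{11}(x)&A_{12}(x)\\ A_{21}(x)&A_{22}(x)\end{bmatrix}
\begin{bmatrix}\psi_n^{(J)}(x)\\ \psi_{n-1}^{(J)}(x)\end{bmatrix},
\]
the analogue of (\ref{KLb})--(\ref{KLc}). This is assembled from the classical structure relation $\sigma(x)\,\tfrac{d}{dx}p_n^{(J)}=\alpha_n p_{n+1}^{(J)}+\beta_n p_n^{(J)}+\gamma_n p_{n-1}^{(J)}$ (available from the hypergeometric representation (\ref{2.25}) and the Jacobi differential equation), the three-term recurrence used to eliminate $p_{n+1}^{(J)}$ and $p_{n-2}^{(J)}$, and the elementary computation $\sigma(x)\,(w^{(J)})'(x)/\big(2w^{(J)}(x)\big)=\tfrac12\big(a-(a+b)x\big)$ coming from differentiation of the square root of the weight. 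As in the Laguerre case the outcome is that $A_{12},A_{21}$ are constants while $A_{11},A_{22}$ are linear in $x$; in particular the leading behaviour $\sigma(x)\sim -x^2$ forces the $x$-coefficient of $A_{11}$ to be $-\,(2N+a+b)/2$ once the weight contribution is included.

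The mechanism of Proposition \ref{P2} then applies in the same manner: the commutation identity
\[
\Big(\sigma(x)\frac{\partial}{\partial x}+\sigma(y)\frac{\partial}{\partial y}\Big)\frac{(\sigma(x)\sigma(y))^{1/2}}{x-y}\,f
=\frac{(\sigma(x)\sigma(y))^{1/2}}{x-y}\Big(\sigma(x)\frac{\partial}{\partial x}+\sigma(y)\frac{\partial}{\partial y}\Big)f
\]
holds because $\tfrac12\big(\sigma'(x)+\sigma'(y)\big)=\big(\sigma(x)-\sigma(y)\big)/(x-y)$ for any quadratic $\sigma$, and in particular for $\sigma(x)=x(1-x)$. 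Taking $f=\big(\psi_N^{(J)}(x)\psi_{N-1}^{(J)}(y)-\psi_N^{(J)}(y)\psi_{N-1}^{(J)}(x)\big)/h_{N-1}^{(J)}$ and carrying out the derivatives through the matrix recurrence reduces the right-hand side to a bilinear form in $(\psi_N^{(J)},\psi_{N-1}^{(J)})$ evaluated at $x$ and at $y$, whose coefficient matrix is built from the difference quotients $\big(A_{ij}(x)\pm A_{ij}(y)\big)/(x-y)$.

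The main obstacle is the collapse of this coefficient matrix. Because $A_{12},A_{21}$ are constant the two diagonal quotients vanish, but the two off-diagonal quotients involve the linear entries $A_{11},A_{22}$ and are finite only if $A_{11}=-A_{22}$; establishing this tracelessness (after incorporating the weight contribution into both diagonal entries) is the one place where the explicit Jacobi three-term recurrence and structure coefficients must be substituted and seen to cancel. Granting it, the matrix reduces to the antidiagonal constant $\begin{bmatrix}0&-(2N+a+b)/2\\ -(2N+a+b)/2&0\end{bmatrix}$, so the singular factor $1/(x-y)$ disappears and only the symmetric combination $\psi_N^{(J)}(x)\psi_{N-1}^{(J)}(y)+\psi_{N-1}^{(J)}(x)\psi_N^{(J)}(y)$ survives, with prefactor $-(2N+a+b)/\big(2h_{N-1}^{(J)}\big)$. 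Setting $y=x$ turns the left-hand side into $s(1-s)\tfrac{d}{ds}\big(s(1-s)K_N^{(J)}(s,s)\big)$ and doubles the surviving product; dividing by $s(1-s)$ and using $\psi_N^{(J)}(s)\psi_{N-1}^{(J)}(s)=w^{(J)}(s)p_N^{(J)}(s)p_{N-1}^{(J)}(s)$ yields (\ref{2.26}).
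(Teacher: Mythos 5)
Your proposal is correct and follows essentially the same route as the paper: both rest on the Tracy--Widom-type two-variable differential identity for the Jacobi kernel obtained from the $2\times 2$ matrix recurrence, which the paper imports from \cite[Prop.~5.4.2]{Fo10} for the weight $(1-x)^a(1+x)^b$ on $(-1,1)$ and transfers to $(0,1)$ via $x=1-2u$, whereas you derive it directly on $(0,1)$ with $\sigma(x)=x(1-x)$. The one step you defer --- the tracelessness $A_{11}=-A_{22}$ and the identification of the antidiagonal constant as $-(2N+a+b)/2$ --- does check out and is precisely the content of the identity the paper quotes.
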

    
    \begin{proof}
    The weight function for the Jacobi unitary ensemble supported on $(-1,1)$
    is
    $$
    \tilde{w}^{(J)}(x) = (1-x)^a(1+x)^b \chi_{-1 < x < 1}.
    $$
    Denote the corresponding correlation kernel by $\tilde{K}_N^{(J)}(x,y)$,
    the corresponding monic orthogonal polynomials by $\tilde{p}_n ^{(J)}(x)$,
    their norm by $\tilde{h}_n^{(J)}$, and corresponding orthogonal functions
    $\tilde{\psi}_n ^{(J)}(x) = \sqrt{ \tilde{w}^{(J)}(x) } \tilde{p}_n ^{(J)}(x)$.
    
    For this variant of the JUE, the Jacobi
    analogue of (\ref{KLa}) is given in \cite[2nd last displayed equation in proof of Prop.~5.4.2]{Fo10},
    \begin{multline*}
    \Big ( (1 - x^2) {\partial \over \partial x} + (1 - y^2) {\partial \over \partial y}  \Big )
    (1 - x^2)^{1/2}  (1 - y^2)^{1/2}  \tilde{K}_N^{(J)}(x,y)  \\
     = - {(2N + a + b) \over 2 \tilde{h}_{N-1}^{(J)}}
    \Big ( \tilde{\psi}_N^{(J)}(x)  \tilde{\psi}_{N-1}^{(J)}(y) +
\tilde{ \psi}_{N-1}^{(J)}(x) \tilde{ \psi}_{N}^{(J)}(y) \Big ).
  \end{multline*}
  After simple manipulation, and the linear change of variables $x = 1 - 2u$,
  $y = 1 - 2v$, this is seen to be equivalent to a differential identity for the 
  correlation kernel of the JUE as originally defined in terms of the weight
   (\ref{WJ}),
    \begin{multline*}
    \Big ( 1 - (u + v) + u (1 - u) {\partial \over \partial u} + v (1 - v) {\partial \over \partial v} \Big ) {K}_N^{(J)}(u,v) \\
    = - {(2N + a + b) \over 2 h_{N-1}^{(J)}} 
   \Big ( {\psi}_N^{(J)}(u)  {\psi}_{N-1}^{(J)}(v) +
{ \psi}_{N-1}^{(J)}(u) { \psi}_{N}^{(J)}(v) \Big ).
\end{multline*}
Taking the limit $u,v \to s$, the LHS reduces to
$$
\Big ( 1 - 2s + s (1 - s) {d \over ds} \Big ) {K}_N^{(J)}(s,s) =  {d \over ds} s (1 - s)  {K}_N^{(J)}(s,s),
$$
which is the LHS of (\ref{2.26}), and taking the same limit on the RHS, we see  the RHS of
(\ref{2.26}) results.
 \end{proof}
  
  \section{Calculation of the structure function and related averages}\label{S3}
  \subsection{Fourier-Laplace transform of the density}
  The Fourier-Laplace transform of the density appears in the expression
  (\ref{0.3g}), which in turn relates to the form of the covariance (\ref{0.1a}) as implied by (\ref{0.3f}).
 The evaluation of its derivative has been given in \cite[Th.~6.4]{HT03}. Revising
 its proof is an instructive preparation for the proof of Theorem \ref{T1}.
 
 \begin{proposition}\label{P5}
 We have
   \begin{equation}\label{Y7} 
   \int_0^\infty   t \rho_{(1)}^{(L)}(t)  e^{st} \, dt  =     {N(N+a) \over  (1 - s)^{2N+a}} \,   {}_2 F_1(-N+1-a,-N+1,2;s^2). 
  \end{equation}   
 \end{proposition}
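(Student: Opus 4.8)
The plan is to reduce the left-hand side of (\ref{Y7}) to the Fourier-Laplace transform of the single product $\psi_N^{(L)}\psi_{N-1}^{(L)}$ by an integration by parts, and then to invoke the explicit evaluation of Proposition \ref{P3}. The starting observation is that $\rho_{(1)}^{(L)}(t)=K_N^{(L)}(t,t)$ by (\ref{1.6}) with $k=1$, so $t\rho_{(1)}^{(L)}(t)$ is precisely $(xy)^{1/2}K_N^{(L)}(x,y)$ restricted to the diagonal $x=y=t$. Specialising Proposition \ref{P2} to this diagonal, the operator $x\partial_x+y\partial_y$ collapses to $t\,d/dt$ while the symmetric right-hand side collapses to $-(t/h_{N-1}^{(L)})\psi_N^{(L)}(t)\psi_{N-1}^{(L)}(t)$; cancelling the common factor $t$ gives the clean first-order identity
\begin{equation*}
{d \over dt}\Big(t\rho_{(1)}^{(L)}(t)\Big)=-{1\over h_{N-1}^{(L)}}\psi_N^{(L)}(t)\psi_{N-1}^{(L)}(t).
\end{equation*}

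Next I would feed this into the transform. Writing $\Phi(t)=t\rho_{(1)}^{(L)}(t)$ and integrating $\int_0^\infty\Phi(t)e^{st}\,dt$ by parts, the boundary terms vanish for $\mathrm{Re}\,s<1$ and $a>-1$ (since $\Phi(t)=O(t^{a+1})$ as $t\to 0^+$ and decays like a polynomial times $e^{-t}$ at infinity), leaving $-s^{-1}\int_0^\infty\Phi'(t)e^{st}\,dt$. The differential identity turns this into $(s\,h_{N-1}^{(L)})^{-1}\int_0^\infty\psi_N^{(L)}(t)\psi_{N-1}^{(L)}(t)e^{st}\,dt$. Using $\psi_n^{(L)}=\sqrt{w^{(L)}}\,p_n^{(L)}$ and $p_n^{(L)}=n!(-1)^nL_n^{(a)}$, the remaining integral equals $-N!(N-1)!\,I_{N,N-1}^{(L)}(s)$ in the notation of (\ref{Jjk}); since $h_{N-1}^{(L)}=(N-1)!\,\Gamma(N+a)$, the factorials collapse and I obtain $\int_0^\infty\Phi(t)e^{st}\,dt=-\big(N!/(s\,\Gamma(N+a))\big)I_{N,N-1}^{(L)}(s)$.

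It remains to evaluate $I_{N,N-1}^{(L)}(s)$ via Proposition \ref{P3} and to massage the resulting hypergeometric factor. That proposition expresses $I_{N,N-1}^{(L)}(s)$ as Pochhammer constants times $(1-s)^{-(a+1)}(-s/(1-s))^{2N-1}\,{}_2F_1(-N+1,-N,a+1;1/s^2)$. Combining the odd power $2N-1$ with the $1/s$ from the integration by parts produces $s^{2N-2}/(1-s)^{2N-1}$, and the surviving $(1-s)^{-(a+1)}$ brings the total power to $(1-s)^{-(2N+a)}$, already matching the claim. The one genuinely nontrivial step is the identity
\begin{equation*}
s^{2N-2}\,{}_2F_1(-N+1,-N,a+1;1/s^2)={N!\,\Gamma(a+1)\over\Gamma(N+a)}\,{}_2F_1(-N+1-a,-N+1,2;s^2),
\end{equation*}
which I would prove by reversing the order of summation in the terminating series, i.e. applying ${}_2F_1(-n,b;c;z)=\tfrac{(b)_n}{(c)_n}(-z)^n\,{}_2F_1(-n,1-n-c;1-n-b;1/z)$ with $n=N-1$, $b=-N$, $c=a+1$. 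The reversed parameters are exactly $1-n-c=-N+1-a$ and $1-n-b=2$, while $(-N)_{N-1}=(-1)^{N-1}N!$ cleans up the constant. Substituting and reducing the Gamma factors through $\Gamma(a+N+1)/\Gamma(a+N)=a+N$ and $N!/(N-1)!=N$ collapses the overall constant to $N(N+a)$, giving (\ref{Y7}).

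I expect the main obstacle to be precisely this final stage: tracking signs and Pochhammer/Gamma prefactors through Proposition \ref{P3} and the series reversal so that the constant lands exactly on $N(N+a)$ and no stray power of $s$ survives. The conceptual moves — the diagonal specialisation of Proposition \ref{P2} and the single integration by parts — are mechanically clean, so the bookkeeping in converting argument $1/s^2$ to $s^2$ is where an error is most likely to creep in.
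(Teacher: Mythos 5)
Your proposal is correct and follows essentially the same route as the paper: the diagonal specialisation of Proposition \ref{P2} giving $\frac{d}{dt}\big(tK_N^{(L)}(t,t)\big)=-\psi_N^{(L)}(t)\psi_{N-1}^{(L)}(t)/h_{N-1}^{(L)}$, integration by parts against $e^{st}$, evaluation of $I_{N,N-1}^{(L)}(s)$ via Proposition \ref{P3}, and conversion of the hypergeometric argument from $1/s^2$ to $s^2$. The only (cosmetic) difference is that you derive the argument-inversion identity by reversing the terminating series, whereas the paper quotes it as Eq.~(\ref{Y6}) from \cite{HT03}; your parameter choices and the resulting constant $N(N+a)$ all check out.
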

 
 \begin{proof}
 Noting that
 $$
 \Big ( x {\partial \over \partial x} + y {\partial \over \partial y} \Big ) (xy)^{1/2}  K_N^{(L)}(tx, ty) =
 (xy)^{1/2} {d \over dt} t K_N^{(L)}(tx, ty),
 $$
 we see upon replacing $x,y$ by $xt, yt$ in (\ref{KLa}), then setting $x=y=1$, that
  \begin{equation}\label{Y4} 
  {d \over dt} t K_N^{(L)}(t,t) = - {1 \over h_{N-1}^{(L)}} \psi_N^{(L)}(t)   \psi_{N-1}^{(L)}(t) .
   \end{equation} 
  We know from (\ref{1.6}) that  $K_N^{(L)}(t,t) = \rho_{(1)}(t)$ and so after multiplying both sides
  of (\ref{Y4}) by $e^{st}$ and integrating we deduce
   \begin{align*}\label{Y5} 
   s \int_0^\infty   t K_N^{(L)}(t,t) e^{st} \, dt & = {1 \over h_{N-1}^{(L)}} \int_0^\infty \psi_N^{(L)}(t)   \psi_{N-1}^{(L)}(t)e^{st}  \, dt  \\
   & = 
   - {N! \over \Gamma(N+a)} \int_0^\infty t^a e^{-t ( 1 - s) } L_{N-1}^{(a)}(t)  L_{N}^{(a)}(t) \, dt.
   \end{align*}  
   The integrand is an example of (\ref{Jjk}), and so application of Proposition \ref{P3} gives
   $$
   \int_0^\infty   t K_N^{(L)}(t,t) e^{st} \, dt =  {(a+1)_N \over (N-1)!} {s^{2N-2} \over (1 - s)^{2N+a}} \,   {}_2 F_1(-N+1,-N,a+1;1/s^2).
  $$
  Use of the polynomial identity \cite[Eq.~(6.17)]{HT03}
   \begin{equation}\label{Y6} 
    {}_2 F_1(-j,-k,a+1;1/s^2) = {k! \over (k-j)! (a+1)_j} \, {1 \over s^{2j}} \,
 {}_2 F_1(-j-a,-j,1+k-j;s^2) ,
  \end{equation} 
  valid for $j,k$ non-negative integers with $j \le k$,
  reduces this to    (\ref{Y7}).
  \end{proof}
  
  \subsection{Proof of Theorem \ref{T1}}
  Our proof of Theorem \ref{T1} makes use of the operator
  $$
  B_{x,y} = 1 + x {\partial \over \partial x} + y {\partial \over \partial y}.
  $$
  In light of the operator identity
    $$
    (xy)^{-1/2} \Big ( x {\partial \over \partial x} +      y {\partial \over \partial y} \Big )   (xy)^{1/2}  = B_{x,y}
    $$
    we see from Proposition \ref{P2} that
    \begin{equation}\label{KLa1}
B_{x,y} K_N^{(L)}(x,y) 
= - {1 \over 2 h_{N-1}^{(L)}} \Big ( \psi_N^{(L)}(x)  \psi_{N-1}^{(L)}(y) +
 \psi_{N-1}^{(L)}(x)  \psi_{N}^{(L)}(y) \Big ).
 \end{equation}
 We observe too the skew self-adjoint property
     \begin{equation}\label{KLa2x}
     \langle f,  B_{x,y} g \rangle^{(2)} = -   \langle B_{x,y} f,   g \rangle^{(2)} , \qquad  \langle f,   g \rangle^{(2)} := \int_{\mathbb R_+^2} f(x,y) g(x,y) \, dx dy,
     \end{equation}
     as well as the identity
   \begin{equation}\label{KLa3}   
   (   B_{z_1,z_2}  - 1) e^{z_1 x + z_2 y} =    (   B_{x,y}  - 1) e^{z_1 x + z_2 y} .
    \end{equation}
    
    Application of first (\ref{KLa3}), then (\ref{KLa2x}), then a direct calculation, and finally
   (\ref{KLa1})  shows
    \begin{align}\label{3.7}
     &(   B_{z_1,z_2}  - 1)    \int_{\mathbb R_+^2}  e^{z_1 x + z_2 y} \Big ( K_N^{(L)}(x,y) \Big )^2 \, dx dy  \nonumber \\
     &\quad =    \int_{\mathbb R_+^2}   \Big ( (   B_{x,y}  - 1) e^{z_1 x + z_2 y} \Big ) \Big ( K_N^{(L)}(x,y) \Big )^2 \, dx dy   \nonumber  \\
    & \quad = - \int_{\mathbb R_+^2}  e^{z_1 x + z_2 y}  (   B_{x,y}  + 1) \Big ( K_N^{(L)}(x,y) \Big )^2 \, dx dy  \nonumber  \\
     &\quad = - 2  \int_{\mathbb R_+^2}   e^{z_1 x + z_2 y}   K_N^{(L)}(x,y)   B_{x,y}  \, K_N^{(L)}(x,y) \, dx dy  \nonumber  \\
    & \quad = {1 \over  h_{N-1}^{(L)}} \int_{\mathbb R_+^2}  e^{z_1 x + z_2 y}  K_N^{(L)}(x,y)  \Big (  \psi_N^{(L)}(x)  \psi_{N-1}^{(L)}(y) +
 \psi_{N-1}^{(L)}(x)  \psi_{N}^{(L)}(y) \Big ) \, dx dy.
     \end{align}
     Now apply the operator
     $$
  {\partial \over \partial z_1} -  {\partial \over \partial z_2}    
  $$
  to the expression in the first line of (\ref{3.7}).  In the form given in the final line of
 (\ref{3.7})  this has the effect of creating a factor $(x-y)$ inside the integrand. But from
  (\ref{KL})
  $$
  (x-y)  K_N^{(L)}(x,y) = {1 \over  h_{N-1}^{(L)}}  \Big (  \psi_N^{(L)}(x)  \psi_{N-1}^{(L)}(y) -
 \psi_{N-1}^{(L)}(x)  \psi_{N}^{(L)}(y) \Big ).
 $$
 Hence we deduce from  (\ref{3.7}) that
 \begin{align}\label{3.8}
& \Big (    {\partial \over \partial z_1} -  {\partial \over \partial z_2} \Big )  (   B_{z_1,z_2}  - 1)    \int_{\mathbb R_+^2}  e^{z_1 x + z_2 y} \Big ( K_N^{(L)}(x,y) \Big )^2 \, dx dy 
\nonumber  \\ & \qquad =
  {1 \over  (h_{N-1}^{(L)})^2} \int_{\mathbb R_+^2}  e^{z_1 x + z_2 y}  \Big (  (\psi_N^{(L)}(x)  \psi_{N-1}^{(L)}(y) )^2 -
 (\psi_{N-1}^{(L)}(x)  \psi_{N}^{(L)}(y) )^2 \Big )  \, dx dy  \nonumber \\
 & \qquad =  \Big ( {N! (N-1)! \over h_{N-1}^{(L)} } \Big )^2 \Big ( I_{N,N}^{(L)}(z_1)  I_{N-1,N-1}^{(L)}(z_2) -  I_{N,N}^{(L)}(z_2)  I_{N-1,N-1}^{(L)}(z_1)  \Big ), 
 \end{align}
 where the second equality follows from the definitions (\ref{1.4}) and (\ref{Jjk}). The significance of this expression is that according to
 Proposition \ref{P3} all terms on the RHS can be evaluated explicitly, reducing it to
    \begin{multline}\label{3.9}
     \bigg ( {(a+1)_N \over (N - 1)!} \bigg )^2 \, (1 - z_1)^{-(a+1)}  (1 - z_2)^{-(a+1)}  \Big ( {z_1 \over 1 - z_1} \Big )^{2(N-1)}  \Big ( {z_2 \over 1 - z_2} \Big )^{2(N-1)}  \\
    \bigg (   \Big ( {z_1 \over 1 - z_1} \Big )^{2}  \, {}_2 F_1(-N,-N,a+1;1/z_1^2)  \, {}_2 F_1(-N+1,-N+1,a+1;1/z_2^2) - (z_1 \leftrightarrow z_2) \bigg ).
    \end{multline}
    
    The Gaussian analogue of the equality between the first line of (\ref{3.8}) and (\ref{3.9}) is given by \cite[Equality between LHS of (3.18) and final
    expression in (3.19)]{Fo20}. Comparison between the two shows that the present Laguerre case is more complicated
    as  the first line of (\ref{3.8}) involves second order partial derivatives, whereas its Gaussian analogue only involves first
   order partial derivatives. Due to this complication, we have not been able to deduce a Laguerre analogue of 
   (\ref{H1X}). However, if we consider instead the special case of the covariance corresponding to the structure function
  (\ref{0.3}), further progress is possible.
  
  Thus set $z_1 = - z_2 = it$ in the equality between the LHS of (\ref{3.8}) and (\ref{3.9}). This gives the simplified identity
     \begin{multline}\label{3.10}
     {t \over 4} {d \over d t} \Big ( t {d \over dt} \Big ) 
    \int_{\mathbb R_+^2}  e^{i t ( x - y)} \Big ( K_N^{(L)}(x,y) \Big )^2 \, dx dy  =   \bigg ( {(a+1)_N \over (N - 1)!} \bigg )^2  \\
    \times (1 + t^2)^{-(a+1)} \Big ( {t^2 \over 1 + t^2} \Big )^{2N}
   \, {}_2 F_1(-N,-N,a+1;-1/t^2)  \, {}_2 F_1(-N+1,-N+1,a+1;-1/t^2) .
   \end{multline}
   In terms of the variable $u = 1/(1 + t^2)$ (\ref{3.10}) reads
  \begin{multline}\label{3.11}
     {d \over d u} \Big ( u (1 - u) {d \over du} \Big ) 
    \int_{\mathbb R_+^2}  e^{i \sqrt{(1-u)/u} ( x - y)} \Big ( K_N^{(L)}(x,y) \Big )^2 \, dx dy  =   \bigg ( {(a+1)_N \over (N - 1)!} \bigg )^2  \\
  \times u^a (1 - u)^{2N - 1} 
   \, {}_2 F_1(-N,-N,a+1;u/(u-1))  \, {}_2 F_1(-N+1,-N+1,a+1;u/(u-1)).
   \end{multline}  
  Recalling now the Pfaff-Kummer transformation for the Gauss hypergeometric function
  $$
   {}_2 F_1(\alpha,\beta,\gamma;z) = (1 - z )^{-\alpha} \,  {}_2 F_1(\alpha,\gamma - \beta, \gamma ;z/(z - 1)) 
 $$ 
 allows the RHS of (\ref{3.11}) to be simplified, reducing it to
  \begin{multline}\label{3.11a}
       \bigg ( {(a+1)_N \over (N - 1)!} \bigg )^2  u^a 
   \, {}_2 F_1(-N,N + a + 1,a+1;u)  \, {}_2 F_1(-N+1,N+a,a+1;u) \\
   = - { 2N + a \over h_{N-1}^{(J)} } u^a  p_N^{(J)}(u)  p_{N-1}^{(J)}(u)  \Big |_{b=0} =
     {d \over d u} \Big ( u (1 - u) K_N^{(J)}(u,u) \Big ) \Big |_{b=0},
   \end{multline}  
   where the first equality follows from (\ref{2.25}) and (\ref{2.25a}) and the second from Proposition \ref{P4}.
 Equating this to the LHS of (\ref{3.11}) and taking the indefinite integral of both sides shows
  \begin{equation}\label{KLa2}
  {d \over du} 
    \int_{\mathbb R_+^2}  e^{i \sqrt{(1-u)/u} ( x - y)} \Big ( K_N^{(L)}(x,y) \Big )^2 \, dx dy  =   
  K_N^{(J)}(u,u)  \Big |_{b=0}.
   \end{equation}
   Integrating both sides from $0$ to $s$ and setting $\sqrt{(1-s)/s} = k$ gives the sought identity
   (\ref{S.1}), upon identifying $ ( K_N^{(L)}(x,y)  )^2 = -  \rho_{(2)}^{T,(L)}(x,y) $ and
   $ K_N^{(J)}(u,u)  = \rho_{(1)}^{(J)}(u)$.
   
   The sum rules
   $$
    \int_{\mathbb R_+^2}  \delta(x-y)   \rho_{(1)}^{(L)}(y) \, dx dy = N  , \qquad
    \int_0^1  \rho_{(1)}^{(J)}(u) \, du = N,
    $$
   which are simply normalisation conditions, show that (\ref{S.2}) is equivalent to (\ref{S.1}),
   after recalling too (\ref{0.3f}).

 \section{Scaled limits}\label{S4}
 \subsection{Global scaling}
 Generally a global scaling limit in random matrix theory is when the entirety of
 the spectrum plays a role. In the LUE this takes effect when eigenvalues are scaled
 according to $\lambda_j = 4 N x_j$, the point being that in the variables $\{x_j \}$ the
 limiting support is compact. Note that this latter feature is not dependent on the
 specific choice of (positive) proportionality --- the choice of 4 is for convenience.
 There are two distinct cases: either the
 Laguerre parameter $a$ is held fixed, or the Laguerre parameter is scaled with $N$.
 In fact the former is the case $\alpha = 0$ in the second scenario, for which
  \begin{equation}\label{MP}
  \lim_{N \to \infty} 4 \rho_{(1)}^{(L)}(4Nx) \Big |_{a = N \alpha} = {2 \over \pi x}
  \sqrt{(c_+^2 - x) ( x - c_-^2)} \chi_{c_-^2 < x < c_+^2}, \quad c_\pm := {1 \over 2} \Big (\sqrt{\alpha + 1} \pm 1 \Big ),
    \end{equation}
 where this functional form is known as the Marchenko-Pastur density \cite{PS11}. Hence for a general
 linear statistic $A = \sum_{j=1}^N a(\lambda_j/4N)$
  \begin{equation}\label{MP1}
    \lim_{N \to \infty}  {1 \over N} \langle A \rangle^{(L)} = {2 \over \pi} \int_{c_-^2}^{c_+^2} { a(x) \over x}
     \sqrt{(c_+^2 - x) ( x - c_-^2)} \, dx.
   \end{equation}     
In the special case $a(x) := a_s(x) = x e^{s x}$, and with $A_s := \sum_{j=1}^N a_s(\lambda_j/4N)$,
 it follows from Proposition \ref{P5} that
(\ref{MP1}) reduces to
  \begin{equation}\label{MP2}
    \lim_{N \to \infty}  {1 \over N} \langle A_s \rangle^{(L)} =  (1 + \alpha) e^{(s/2)(1 + \alpha/2)} \, {}_0F_1(2;(1+\alpha)(s/2)^2),
 \end{equation}     
 which can also be obtained directly from (\ref{MP1}); see \cite[\S 6.6]{HT03}. 
 
 It is fundamental in random matrix theory that the variance of a smooth linear
 statistic in the global scaling limit is of $O(1)$. From  the
 definition (\ref{0.3}) of $S_N(k)$ we have that
 \begin{equation}\label{4.3a}
  {\rm Var} \, \Big ( \sum_{j=1}^N e^{i k \lambda_j/\sqrt{2} N}  \Big )^{(L)} = S_N^{(L)}(k/\sqrt{2}N) =
 \int_{1/(1 + (k/\sqrt{2}N)^2)}^1 \rho_{(1)}^{(J)}(x) \Big |_{b=0} \, dx
  \end{equation} 
 (here the factor of $\sqrt{2}$ in the global scaling is for later convenience; recall the second sentence of the
 first paragraph above), where the second equality follows from (\ref{S.2}). Recalling the symmetry of the Jacobi ensemble
 under the mappings $a \leftrightarrow b$, $x \mapsto 1 - x$ (recall (\ref{WJ})) allows us to write
\begin{multline}\label{U1}    
  \int_{1/(1 + (k/\sqrt{2}N)^2)}^1 \rho_{(1)}^{(J)}(x) \Big |_{b=0} \, dx =
   \int_0^{ (k/\sqrt{2}N)^2/(1 + (k/\sqrt{2}N)^2)} \rho_{(1)}^{(J)}(x) \Big |_{a=0} \, dx \\ =
   {1 \over 2 N^2} \int_0^{k^2/(1 + (k/\sqrt{2}N)^2)}  \rho_{(1)}^{(J)}(x/2N^2) \Big |_{a=0} \, dx ,
    \end{multline}   
   where the second equality follows by a simple change of variables.
  
   The utility of (\ref{U1}) follows from the  standard limit theorem in random matrix theory
   (see e.g.~\cite[\S 7.2.5]{Fo10}) that
  \begin{equation}\label{U2}   
  {1 \over 2 N^2}   \rho_{(1)}^{(J)}(x/2N^2) \Big |_{a=0}  = \rho_{(1)}^{\rm hard}(x) \Big |_{a=0}  =
  {1 \over 4} \Big ( (J_0(x^{1/2}))^2 +  (J_1(x^{1/2}))^2 \Big ),
  \end{equation}    
  where $J_n(v)$ denotes the Bessel function of order $n$ and $ \rho_{(1)}^{\rm hard}(x)$ denotes the scaled hard
  edge state with unitary symmetry \cite{Fo93a}, with a remainder term that can readily be checked to be
  uniform for $x$ on a compact set of the positive half line. Hence
  \begin{multline}\label{U3}
   S_\infty^{(L), {\rm global} }(k) :=   \lim_{N \to \infty} 
  S_N^{(L)}(k/\sqrt{2}N) =  \int_0^{k^2}   \rho_{(1)}^{\rm hard}(x) \Big |_{a=0}   \, dx \\
 = {1 \over 4} \int_0^{k^2} \Big (  (J_0(x^{1/2}))^2 +  (J_1(x^{1/2}))^2 \Big ) \, dx.
   \end{multline}
   We remark that since for $x \to \infty$, $  \rho_{(1)}^{\rm hard}(x) \sim 1/(2 \pi x^{1/2})$ (this holds
   independent of the parameter $a$; see \cite[Eq.~(7.74)]{Fo10}), for $k \to \infty$
   \begin{equation}\label{U4} 
     S_\infty^{(L), {\rm global} }(k) 
   \sim {k \over  \pi} . 
    \end{equation}
   This corresponds to the `ramp' in the dip-ramp-plateau effect discussed in the Introduction.
   We remark too that the global scaling limit of the structure function for the GUE as implied by
   (\ref{bhX}) is also given by the same functional form (\ref{U3}) (note that this is dependent on the
   precise choice of the proportionality in the global scaling --- more generally this statement would
   hold after appropriately identifying $k$) \cite{Fo20}. In the latter reference it is noted that the
   integral in (\ref{U3}) can be evaluated explicity; see \cite[Eq.~(3.28)]{Fo20}.
   
   In addition to the variance of a smooth linear statistic in the global scaling limit being $O(1)$,
   another generic feature is that their limiting distribution satisfies a central limit theorem;
   see \cite{PS11}. Recently the question of the rate of convergence to the central limit theorem
   has attracted attention from a number of different viewpoints \cite{LLW19,BB19,JKM20}.
   The formula (\ref{4.3a}) allows the convergence rate question to be addressed for the variance of
   the specific linear statistic relating to the structure function in the LUE. According to (\ref{U1}),
   this is determined in turn by the rate of convergence of the hard edge scaled density for the
   JUE. On this, we have the recent large $N$ expansion \cite[Prop.~2]{MMM19} (see also the
   related work \cite{FT19})
    \begin{equation}\label{U5} 
    \rho_{(1)}^{(J)}(x/2N^2) \Big |_{a=0} = \rho_{(1)}^{\rm hard}(x) \Big |_{a=0} +
    {b \over N} x {d \over dx}  \rho_{(1)}^{\rm hard}(x) \Big |_{a=0}  + O \Big ( {1 \over N^2} \Big ),
    \end{equation}
   telling us that the rate is $O(1/N)$.

   \subsection{Bulk scaling of the linear statistic $\sum_{j=1}^N e^{i k \lambda_j}$}
   Bulk scaling refers to using a linear change of variables so that the eigenvalues away from the
   edges have nearest neighbour of order unity for $N$ large. For the Laguerre ensemble, the
   support of the eigenvalue density is an interval of length proportional to $N$, in both the cases
   of $a$ fixed or proportional to $N$
   
   Before considering the corresponding limiting form of $S_N^{(L)}$, in view of the interest in
   (\ref{0.3g}) for $A,B$ given by (\ref{0.2}) with $k_1 = k_2 = k$, we first make some
   remarks in relation to the average of the linear statistic $\sum_{j=1}^N e^{i k \lambda_j}$.
   For $N,k$ large, it follows from (\ref{MP1}) that
    \begin{equation}\label{MP+}
  \Big \langle \sum_{j=1}^N e^{i k \lambda_j}  \Big   \rangle^{(L)} \sim {2 N \over \pi} \int_{c_-^2}^{c_+^2} { e^{i 4 N k x} \over x}
     \sqrt{(c_+^2 - x) ( x - c_-^2)} \, dx.
   \end{equation}   
   There are two distinct behaviours, depending on $\alpha = 0$ (and thus $c_- = 0$) or $\alpha > 0$.
   In the former, expanding the integrand in the neighbourhood of $x=0$ and changing variables shows
     \begin{equation}\label{Ma} 
    \Big \langle \sum_{j=1}^N e^{i k \lambda_j}  \Big   \rangle^{(L)} \sim  \sqrt{N \over i \pi k}.
 \end{equation}  
 As noted in \cite{HL18}, the absolute value squared of (\ref{Ma}) being of order $N$, and its slow
 (relative to the Gaussian case \cite{C+17,Fo20}) $O(1/k)$ `dip' obscures the `ramp' in the
 dip-ramp-plateau effect. In contrast, for $\alpha > 0$, expanding the integrand in the neighbourhoods
 of both endpoints $c_+^2> c_-^2>0$ shows that for some $s_\pm(\alpha)$ independent of $k,N$
   \begin{equation}\label{Mb} 
    \Big \langle \sum_{j=1}^N e^{i k \lambda_j}  \Big   \rangle^{(L)} \sim   {1 \over N^{1/2}} {1 \over k^{3/2}}
    \Big ( s_+(\alpha) e^{i c_+^2 4 N k} + s_-(\alpha) e^{i c_-^2 4 N k}  \Big ),
 \end{equation}  
 which exhibits the same rate of decay in both $N$ and $k$ as the Gaussian case, indicating that the
 dip and ramp are distinct effects (see also \cite[Section IV.B with the identification $k = \tau/ N$]{CL18}).
 
 \subsection{Bulk scaling of the structure function $S_N^{(L)}$ and proof of Corollary \ref{C1}}
 The limiting form of $S_N^{(L)}$ is easy to compute from (\ref{S.2}). The latter reduces the task to computing
 the global limiting form of the density of the JUE --- it is a global scaling since there are of order $N$
 eigenvalues in the interval $(c,1)$ for any $0<c < 1$. This is known in random matrix theory from a
 result of Wachter  \cite{Wa78},
   \begin{equation}\label{Wa}  
  \lim_{N \to \infty} {1 \over N}  \rho_{(1)}^{\rm JUE}(x) \Big |_{b=0, a = \alpha N} =
  {1 \over \pi (1 - \sqrt{c})} {1 \over x} \sqrt{x - c \over 1 - x} \chi_{c < x < 1} , \quad   {1 \over (1 - \sqrt{c})}  = 1 + \alpha/2.
   \end{equation}   
   The statement of Corollary \ref{C1} now follows, where the integral in (\ref{De5}) has been evaluated
   with the help of  computer algebra.
   
   The case of fixed $a$ corresponds to the case $\alpha = 0$ ($c = 0$) in this formula and so
   \begin{equation}\label{Wa1}  
  \lim_{N \to \infty} {1 \over N}  \rho_{(1)}^{(J)}(x) \Big |_{b=0,  \: a \: {\rm fixed}} =
 {1 \over \pi}  {1 \over  \sqrt{x (1 - x)}} \chi_{0 < x < 1}.
   \end{equation}  
  Use of this in (\ref{S.2}) validates (\ref{De7}).
   
   Contrary to the results of an approximate analysis \cite[Eq.~(3.15), Fig.~3]{HL18},
   \cite[Eq.~(4.17), Figure 3]{Li18}, our exact result (\ref{De7}) shows that for the LUE with
   $a$ fixed there is no transition from a ramp to plateau in the graphical shape of
   $S_\infty^{(L)}(k;0)$. The exact result exhibits the limiting forms
     \begin{align}\label{Wa2}   
 S_\infty^{(L)}(k;0)  & \mathop{\sim}\limits_{k \to 0^+}  {2k \over \pi} - {2k^3 \over 3 \pi } + O(k^5)  \nonumber \\
  S_\infty^{(L)}(k;0)  & \mathop{\sim}\limits_{k \to \infty}  1 - {2 \over \pi k} + {2 \over 3 \pi k^3} + O  (k^{-5}  ),
  \end{align}
  and $S_\infty^{(L)}(k;0)$ is real analytic for $k > 0$.
   
    In contrast to the behaviour of $S_\infty^{(L)}(k;0)$,  (\ref{De5}) and (\ref{De6}) show that with $a = \alpha N$ there is a  transition to a plateau
    $S_\infty^{(L)}(k;\alpha) = 1$, occurring at the value of $k$ specified by (\ref{De4}). Like in the
    Gaussian case \cite{BH97} the ramp portion of the graph is curved, although the leading
    small $k$ form is linear
    \begin{equation}\label{Wa3}  
     S_\infty^{(L)}(k;\alpha)   \mathop{\sim}\limits_{k \to 0^+}  {2 \sqrt{1 + \alpha}  \over \pi} k  + O(k^2) .
  \end{equation}  
  Graphical plots indicate that for $0 < k < k_c$, $S_\infty^{(L)}(k;\alpha)$ is concave, with curvature increasing
  as $\alpha$ decreases.
  As $k \to k_c^-$, use of the first of the expressions in     (\ref{De5}) shows
   \begin{equation}\label{Wa4}  
    S_\infty^{(L)}(k;\alpha)   \mathop{\sim}\limits_{k \to  k_c^-}  1-  {2 \over 3 \pi (1 - \sqrt{c})} {1 \over c \sqrt{1 - c}} \bigg ( \bigg (
    {k_c^2 - k^2 \over (1 + k^2)(1 + k_c^2)} \bigg )^{3/2} + O \bigg (
    {k_c^2 - k^2 \over (1 + k^2)(1 + k_c^2)} \bigg )^{5/2} \bigg ).
   \end{equation}  
   Hence both the function value (which is equal to 1), and the value of its first derivative    
   (which is equal to $0$) agree at the transition to the plateau.

  \subsection*{Acknowledgements}
	This research is part of the program of study supported
	by the Australian Research Council Centre of Excellence ACEMS
	and the Discovery Project grant DP210102887.
	The efforts of the referees in improving the paper is acknowledged.

\appendix
\section*{Appendix}\label{A1}
\renewcommand{\thesection}{A} 
\setcounter{equation}{0}
In Remark \ref{Re1}.2 it was noted that a referee outlined to the author an
approximate analysis which reproduces (\ref{De7}). This analysis is based
on the universal form (\ref{De7.2}), or equivalently its Fourier transform
\begin{equation}\label{A.1}
\int_{-\infty}^\infty {\sin^2(\pi \rho w) \over w^2} e^{i w \tau} \, dw =
\left \{  \begin{array}{ll} \pi^2 \Big ( \rho - {\tau /( 2 \pi)} \Big ), &  \tau/(2 \pi) < \rho \\
0, &  \tau/(2 \pi) > \rho, \end{array} \right.
\end{equation}
where it is assumed $\tau > 0$.
In the same remark we commented that this idea can be also found in the original
paper of Br\'ezin and Hikami \cite{BH97}, where it was used to anticipate the findings of
their exact analysis giving the functional form (\ref{De7.2x}) for the GUE; see also
\cite[\S 3]{Ok19}.

To present the argument, first note that combining (\ref{0.3}) with the first line of
(\ref{0.3g}) in the case $\Gamma = 0$ shows
\begin{equation}\label{A.2}
S_N(k) = \int_{-\infty}^\infty d \lambda \, e^{i k \lambda} \int_{-\infty}^\infty d \lambda' \, e^{i k \lambda'}
\Big ( \rho_{(2)}^T(\lambda, \lambda') + \delta(\lambda - \lambda') \rho_{(1)}(\lambda') \Big ).
\end{equation}
Next change variables $\lambda, \lambda' \mapsto 4N \lambda, 4N \lambda'$ so that the density
for large $N$ has the leading form
\begin{equation}\label{A.3}
4 \rho_{(1)}(4 N \lambda) \sim N \rho^{\rm MP}(\lambda), \qquad
\rho^{\rm MP}(\lambda) := {2 \over \pi} \sqrt{(1-\lambda)/\lambda}  \, \chi_{0 < \lambda < 1}
\end{equation}
(this is (\ref{MP}) with $\alpha = 0$). 
Following  \cite{BH97}, or the independent working of the referee, the
key hypothesis is the approximation
\begin{equation}\label{A.3a}
4^2 \rho_{(2)}^T(4 N \lambda, 4 N \lambda') \approx -
{ ( \sin [ \pi N \rho^{\rm MP}((\lambda + \lambda')/2) ( \lambda - \lambda')] )^2 \over
(\pi ( \lambda - \lambda'))^2 } \chi_{0 < \lambda, \lambda' < 1}
\end{equation}
which is based on the universal bulk scaling form (\ref{De7.2}).

Noting that the double integral of the second term in (\ref{A.2}) equals $N$ as a 
normalisation, and changing variables
$$
w = N (\lambda - \lambda'), \qquad u =  (\lambda + \lambda')/2
$$
then gives as a large $N$ approximation
\begin{equation}\label{A.4}
{1 \over N} S_N^{(L)}(k) \approx 1 - {1 \over \pi^2} \int_{-\infty}^\infty dw \int_0^1 du \,
{\sin^2 (\pi \rho^{\rm MP}(u) w) e^{4 i w k} \over w^2}.
\end{equation}
Changing the order of integration, and using (\ref{A.1}) with
$\tau = 4 k > 0$ then shows
\begin{equation}\label{A.4x}	 
{1 \over N} S_N^{(L)}(k) \approx 1 -  \int_0^{u^*} \Big ( \rho^{\rm MP}(u)  - {2k \over \pi} \Big ) \, du,
\end{equation}
where $u^* = u^*(k)$ is such that
\begin{equation}\label{A.4a}
{\pi \over 2} \rho^{\rm MP}(u^*) = k,
\end{equation}
and thus from (\ref{A.3}) has the explicit value
\begin{equation}\label{A.4b}
u^* = {1 \over 1 + k^2}.
\end{equation}

Differentiating (\ref{A.4x}) with respect to $k$, and making use of
(\ref{A.4a}) and (\ref{A.4b}) shows
\begin{equation}\label{A.4c}
{d \over dk} \Big ( {1 \over N} S_N(k) \Big ) \approx {2 \over \pi} u^* = {2 \over \pi} {1 \over 1 + k^2},
\end{equation}
which we see is in precise agreement with (\ref{De7}).
As emphasised by the referee, a further prediction of this working is that for a random matrix
ensemble in the unitary symmetry class, and thus possessing a bulk scaled two-point function
(\ref{De7.2}), the ramp-plateau transition will be absent whenever the corresponding spectral
density is unbounded. Thus the distinction of the behaviours (\ref{De7}) when the limiting spectral
density is given by (\ref{A.3}), and (\ref{De5}), (\ref{De6}) when the limiting
spectral
density is given by (\ref{MP}).


\begin{thebibliography}{10}

\bibitem{BB19}
S.~Berezin and A.I.~Bufetov,
 \emph{On the rate of convergence in the central limit theorem for linear statistics of
 Gaussian, Laguerre and Jacobi ensembles}, arXiv:1904.09685.
 
 \bibitem{Be85}
 M.V.~Berry, \emph{Semiclassical theory of spectral rigidity}, Proc.~R.~Soc.~Lond. A \textbf{400}, 229--251.


\bibitem{BH97}
E.~Br\'ezin and S. Hikami, \emph{Spectral form factor in a random matrix theory}, Phys. Rev. E
\textbf{55} (1997), 4067--4083.  



\bibitem{CMS17}
A. del Campo, J. Molina-Vilaplana and J. Sonner, \emph{Scrambling the spectral form factor:
unitarity constraints and exact results}, Phys. Rev. D \textbf{95} (2017), 126008. 

\bibitem{CL18} X.~Chen and A.W.W.~Ludwig, \emph{Universal spectral correlations in the chaotic
wave function, and the development of quantum chaos}, Phys.~Rev.~B \textbf{98} (2018), 064309.

\bibitem{CS95}
M.-P.~Chen and H.M.~Srivastava, \emph{Orthogonality relations and generating functions for
Jacobi polynomials and related hypergeometric functions},
Applied Math.~Comp. \textbf{68} (1995), 153--188.

\bibitem{CMC19}
A.~Chenu, J.~Molina-Vilaplana and A.~del Campo, \emph{Work statistics, Loschmidt echo
and information scrambling in chaotic quantum systems}, Quantum \textbf{3}, (2019) 127 

\bibitem{C+17} J.S. Cotler, G. Gur-Ari, M. Hanada, J. Polchinski, P. Saad, S.H. Shenker, D. Stanford,
A. Streicher and M. Tezuka, \emph{Black Holes and Random Matrices}, JHEP \textbf{1705} (2017), 118;
Erratum: [JHEP \textbf{1809} (2018), 002]

\bibitem{CH19} J.S. Cotler and N.~Hunter-Jones, \emph{Spectral decoupling in many-body quantum
chaos}, arXiv:1911.02026

\bibitem{CHLY17}
J.S. Cotler, N. Hunter-Jones, J. Liu and B. Yoshida, \emph{Chaos, Complexity, and Random
Matrices}  JHEP \textbf{1711} (2017), 048 

\bibitem{DMS19}
D. S. Dean, P. Le Doussal, S. N. Majumdar, and G. Schehr, \emph{Noninteracting fermions in a trap and random
matrix theory}, J. Phys. A \textbf{52} (2019), 144006.

\bibitem{EL15}
A.~Edelman and M.~La Croix, \emph{The singular values of the GUE (less is more)},
Random Matrices: Th.~Appl. \textbf{4} (2015) 1550021.


\bibitem{EMOT54}
A.~Erd\'elyi, W.~Magnus, F.~Oberhettinger and F.G.~Tricomi, \emph{Tables of Integral
Transforms}, Vol.~I, McGraw-Hill, New York, Toronto, and London, 1954.

\bibitem{EY17}
L. Erd\"os and H.-T. Yau, \emph{A dynamical approach to random matrix theory},
Courant Lecture Notes in Mathematics, vol. 28, Amer. Math. Soc. Providence, 2017.


\bibitem{Fo93a}
P.J. Forrester, \emph{The spectrum edge of random matrix ensembles}, Nucl.
  Phys. B \textbf{402} (1993), 709--728.
  
 \bibitem{Fo06} 
  P.J. Forrester,
  \emph{Evenness symmetry and inter-relationships between gap probabilities in random matrix theory},
  Forum Math. \textbf{18} (2006), 711--743.

 \bibitem{Fo10}
P.J.~Forrester, \emph{Log-gases and random matrices}, Princeton University Press,
  Princeton, NJ, 2010.

 \bibitem{Fo20}
 P.J.~Forrester, \emph{Differential identities for the structure function of some random matrix ensembles},
 J. Stat. Phys. \textbf{183} (2021), 1--28.
 
 \bibitem{FT19} P.J.~Forrester and  A. K.~Trinh,
\emph{Finite size corrections at the hard edge for the Laguerre $\beta$ ensemble},
Stud. Appl. Math. \textbf{143} (2019), 315--336.
 
 \bibitem{GR80}
I.S. Gradshteyn and I.M. Ryzhik, \emph{Table of integrals, series, and
  products}, 4th ed., Academic Press, New York, 1980.
  
   \bibitem{HT03}
U.~Haagerup and S.~Thorbj{\o}rnsen, \emph{Random matrices with complex {G}aussian
	entries}, Expo. Math. \textbf{21} (2003), 293--337.

  

 \bibitem{Ha00}
F.~Haake, \emph{Quantum signatures of chaos}, 2nd ed., Springer, Berlin, 2000.

 \bibitem{HL18}
N. Hunter-Jones, J. Liu,  \emph{Chaos and random matrices in supersymmetric SYK},
JHEP \textbf{2018} (2018), 202.

 \bibitem{JKM20}
 B.~Jonnadula, J.P.~Keating and F.~Mezzadri,
  \emph{Symmetry function theory and unitary invariant ensembles},
  arXiv:2003.02620.

 \bibitem{KW17}
T. Kanazawa and T. Wettig, \emph{Complete random matrix classication of SYK models
with $\mathcal N = 0, 1$ and 2 supersymmetry}, JHEP \textbf{2017} (2017) 050.

 \bibitem{LLW19}
 G.~Lambert, M.~Ledoux and C.~Webb, \emph{Quantitative normal approximation of linear
 statistics of $\beta$-ensembles}, 
 Ann.~Prob.~\textbf{47} (2019), 2619--2685.
 

\bibitem{Le04}
M.~Ledoux,
\emph{Differential operators and spectral distributions of invariant ensembles
	from the classical orthogonal polynomials. The continuous case},
Electron. J. Probab. \textbf{9} (2004), 177--208.

\bibitem{LOS01}
P.-A.~Lee, S.-H.~Ong and H.M.~Srivastava, \emph{Some integrals of the products
of Laguerre polynomials},
Int. J. Comp. Math. \textbf{78} (2001), 303--321.

 \bibitem{LLJP86} 
 L. Leviandier, M. Lombardi, R. Jost and J. P. Pique,
 \emph{Fourier Transform: A Tool to Measure
Statistical Level Properties in Very Complex Spectra}, Phys. Rev. Lett. \textbf{56} (1986), 2449 .

\bibitem{LLXZ17}
T.~Li, J.~Liu, Y.~Xin and Y.~Zhou,
\emph{Supersymmetric SKY model and random matrix theory},
JHEP \textbf{2017} (2017) 111

\bibitem{Li18}
J. Liu, \emph{Spectral form factors and late time quantum chaos}, Phys. Rev. D \textbf{98} (2018), 
086026.


\bibitem{Ma35}
K.~Mayr, \emph{ Integraleigenschaften der Hermiteschen und Laguerreschen Polynome},
Math. Zeitschr. \textbf{39} (1935), 597 -604.





\bibitem{Mo11}
I.O.~Morales,  E.~Landa,  P. Str\'ansk\'y,  A.~Frank,  \emph{Improved  unfolding  by detrending  of  statistical  fluctuations  in  quantum  spectra},
  Phys  Rev  E \textbf{84} (2011), 016203
  
  \bibitem{MMM19} L.~Moreno-Pozas, D.~Morales-Jimenez and M.R.~McKay,
 \emph{Extreme eigenvalue distributions of Jacobi ensembles: new exact
 representations, asymptotics and finite size corrections}, Nucl.~Phys.~B {\bf 947} (2019), 114724.

\bibitem{MS96}
 M.~Moshinsky, Y.F.~Smirnov, \emph{The  Harmonic  oscillator  in  modern  physics}, 
 (Contemporary Concepts in Physics Volume 9), Harwood Academic Publishers,  Amsterdam (1996)

 \bibitem{Ok19}
  K. Okuyama, \emph{Spectral form factor and semi-circle law in the time direction},
  JHEP \textbf{2019} (2019), 161. 
  
  \bibitem{Pa93}
  D.N.~Page, \emph{Average entropy of a
subsystem}, Phys. Rev. Lett. \textbf{71} (1993), 1291--1294.

 
   \bibitem{PS11}
L.~Pastur and M.~Shcherbina, \emph{Eigenvalue distribution of large random
  matrices}, American Mathematical Society, Providence, RI, 2011.
  
  \bibitem{PBM86}
  A.P.~Prudnikov, Yu.A. Brychkov and O.L.~Marichev, \emph{Integrals and Series, Vol 2:
Special Functions}, 
Gordon and Breach, New York, 1986.

\bibitem{SM84}
H.M.~Srivastava and H.L.~Manocha, \emph{A treatise on generating functions}, John Wiley and Sons/ Ellis
Horwood, Chichester, 1984.
 

\bibitem{TGS18}
E.J.~Torres-Herrera, A.M.~Garc\'ia-Garc\'ia,  and L.F.~Santos, 
\emph{Generic dynamical features of quenched interacting quantum systems: 
Survival probability, density  imbalance,  and  out-of-time-ordered  correlator}, Phys. Rev. B
\textbf{97} (2018), 060303.

\bibitem{TW94c}
C.A.~Tracy and H.~Widom, \emph{Fredholm determinants, differential equations and matrix
  models}, Commun. Math. Phys. \textbf{163} (1994), 33--72.
  
  \bibitem{Ve94}
J.J.M. Verbaarschot, \emph{The spectrum of the {Dirac} operator near zero
  virtuality for $n_c=2$ and chiral random matrix theory}, Nucl. Phys. B
  \textbf{426} (1994), 559--574.
  
  \bibitem{Wa78}
K.W. Wachter, \emph{The strong limits of random matrix spectra for sample
  matrices of independent elements}, Annal. Prob. \textbf{6} (1978), 1--18.
 
 
  \end{thebibliography}
\nopagebreak

\providecommand{\bysame}{\leavevmode\hbox to3em{\hrulefill}\thinspace}
\providecommand{\MR}{\relax\ifhmode\unskip\space\fi MR }
\providecommand{\MRhref}[2]{%
  \href{http://www.ams.org/mathscinet-getitem?mr=#1}{#2}
}
\providecommand{\href}[2]{#2}

   \end{document}